\newtheorem{theorem}{Theorem}
\newtheorem{lemma}[theorem]{Lemma}
\newtheorem{corollary}[theorem]{Corollary}
\newtheorem{definition}{Definition}
\newtheorem{fact}[definition]{Fact}
\newcommand{\defeq}{\stackrel{\mathsmaller{\mathsf{def}}}{=}}
\long\gdef\boxit#1{\vspace{5mm}\begingroup\vbox{\hrule\hbox{\vrule\kern3pt
\vbox{\kern3pt#1\kern3pt}\kern3pt\vrule}\hrule}\endgroup}
\begin{document}

\title{Fast and Efficient Distributed Computation of Hamiltonian Cycles in Random Graphs\thanks{Supported, in part, by NSF grants CCF-1527867, CCF-1540512,  IIS-1633720,  CCF-1717075, and
BSF award 2016419.}}

\author{
    \IEEEauthorblockN{Soumyottam Chatterjee, Reza Fathi, Gopal Pandurangan, Nguyen Dinh Pham}

    \IEEEauthorblockA{Department of Computer Science, University of Houston, Houston,
Texas, 77204, USA \! \\
{\tt soumyottam@acm.org}, {\tt rfathi@cs.uh.edu}, {\tt gopalpandurangan@gmail.com}, {\tt aphamdn@gmail.com} }}

\maketitle

\begin{abstract}
  We present  fast  and efficient randomized distributed algorithms to find Hamiltonian cycles in random  graphs. In particular, we present a randomized distributed algorithm for the $G(n,p)$ random graph model,
  with number of nodes $n$ and $p=\frac{c\ln n}{n^{\delta}}$ (for any constant $0 < \delta \leq 1$ and for a suitably large constant $c > 0$),  that finds a Hamiltonian cycle with high
  probability in $\tilde{O}(n^{\delta})$ rounds.\footnote{The notation $\tilde{O}$ hides a $\text{polylog}(n)$ factor.}  Our algorithm works in the (synchronous) CONGEST model (i.e., only $O(\log n)$-sized messages are communicated per edge per round) and its computational cost per node is  sublinear  (in $n$) per round and is fully-distributed (each node uses only $o(n)$ memory and all nodes' computations are essentially balanced).
    Our algorithm improves over the previous best known result in terms of both the running time as well as the edge sparsity of the graphs where it can succeed; in particular, the denser the random graph, the smaller is the running time.
\end{abstract}

\section{Introduction}

Finding Hamiltonian cycles (or paths) in graphs (networks) is one of the  fundamental graph problems.
Hamiltonian cycle (HC) is a cycle in the graph that passes through each node exactly once.
The decision problem is NP-complete \cite{garey} (in fact, it is one of Karp's six basic NP-complete problems) and hence unlikely to have a polynomial time algorithm in the sequential setting.
In this paper,
we focus on the distributed computation of
  Hamiltonian cycles (or paths) in a (undirected) graph. In particular, our goal is to find a {\em fast}, {\em efficient}, and {\em fully}  distributed  algorithm for the Hamiltonian cycle problem. By ``fast'', we mean running in a small number of {\em rounds} (ideally, sublinear in $n$, where $n$ is the number of nodes in the network). By ``efficient'', we mean that  only small-sized messages (say, at most $O(\log n)$-sized messages) are exchanged per edge per round, and the per-round computation per node  should  also be small, i.e.,  sublinear in $n$. The latter means that the local (i.e., ``within node") computation is also efficient. By ``fully-distributed'', we (informally) mean that no one node (or a small set of nodes) does all the non-trivial (local) computation and all the local computations are (more or less) balanced (formally we enforce this by  assuming
  that each node's  memory is limited to $o(n)$).

Since the HC problem is NP-complete, there is not much hope of achieving a fast and efficient distributed  algorithm (even if we allow polynomial time local computation per round and even without caring whether it is  fully-distributed or not)
in arbitrary graphs, even if we allow polynomial number of rounds (since  the  total local computation time over all nodes is at most polynomial).
However, the problem is reasonable and, yet challenging, when we consider random graphs, where efficient sequential algorithms (nearly linear time) for computing Hamiltonian cycles are known.

Despite the importance of the Hamiltonian cycle problem, there has been only some previous work  in the distributed setting. The work of Das Sarma et al \cite{stoc11} (see also \cite{podc14}) showed an important lower bound for the HC problem for general graphs in the CONGEST model of distributed computing \cite{peleg-book} (described in detail in Section \ref{sec:model}), a standard model where there is a bandwidth restriction on the edges (typically,
only $O(\log n)$-sized messages are allowed per edge per round, where $n$ is the graph/network size).
They showed that any deterministic algorithm (this was extended to hold even for {\em randomized} algorithms
in \cite{podc14}) needs at least $\tilde{\Omega}(D+ \sqrt{n})$ rounds, where $D$ is the graph diameter\footnote{The notation $\tilde{\Omega}$ hides a $1/\text{polylog} n$ factor.}. Note that this lower bound holds even if every node's local computation is free (i.e., there is no restriction on the within node computation cost in a round --- this is the usual assumption in the CONGEST model \cite{peleg-book}). It is important to note that this lower bound is for general graphs; more precisely, it holds for a family of graphs constructed in a special way.

Somewhat surprisingly, no non-trivial upper bounds are known for the distributed HC problem in the CONGEST model. A trivial upper bound in the CONGEST model is $O(m)$ where $m$ is the number of edges of the graph (cf. Section \ref{sec:model}). It is not known if one can get a $(O(D) + o(n))$-round algorithm or even a $(O(D)+ o(m))$-round algorithm for HC in general graphs, where $D$ is the graph diameter (note that
$D$ is a lower bound \cite{stoc11}).
In this paper, we show that we can obtain significantly faster (truly sublinear in $n$) algorithms, i.e., running in time
$O(n^{\delta})$ rounds (where $0 < \delta <  1$) in  random graphs.




We focus on the $G(n,p)$ random graph model\cite{erdos1960evolution}, a popular and well-studied model of random graphs with a long history in the study of graph algorithms (see e.g., \cite{Bollobas-book,frieze-survey} and the references therein).
Random graphs such as $G(n,p)$ and its variants and generalizations (e.g., the Chung-Lu model \cite{chung-lu}) have been used extensively to model and analyze real-world networks.
In the $G(n,p)$ random graph model, there are $n$ nodes and the probability that an edge exists between any two nodes is $p$ (independent of other edges).
A remarkable property of $G(n,p)$ model is
that if $p$ is above a certain threshold, then with high probability (whp)\footnote{Throughout, by ``with high probability (whp)'', we mean a probability at least $1 - 1/n^c$, for some constant $c>0$, where $n$ is the number of nodes.},
a Hamiltonian cycle (HC) exists. More precisely,
it is known that, with high probability, for $n$ sufficiently large, there exists a HC in
  $G(n,p)$ if $p  \geq  \frac{c\ln n}{n}$, for any constant $c > 1$  \cite{palmer}; in fact, not one, but it can be shown that exponential number of Hamiltonian cycles exist \cite{glebov,cooper1989number} for $p$ above this threshold \footnote{Actually, the ``real" threshold for Hamiltonian cycles is $p \geq \frac{\ln n + \ln \ln n +\omega(1)}{n}$, if one wants to show the existence of HC asymptotically almost surely \cite{Bollobas-book}. We use a slight larger threshold, since we want algorithms that succeed to find a HC whp.}
It is worth noting that the above threshold  for $p$ is (essentially) the same as the threshold  for connectivity of a $G(n,p)$ random graph.

Since it is known that Hamiltonian cycles exists in $G(n,p)$ random graphs, there has been work in devising efficient  algorithms for {\em finding}  Hamiltonian cycles in these graphs.
This is a non-trivial task, even though as mentioned earlier that there are exponential number of HCs present.
Angluin and Valiant \cite{angluin1979fast}, in a seminal paper (see also \cite{mitzenmacher2017probability}),  gave a sequential algorithm  to find a HC in
a $G(n,p)$ graph that runs in $O(n(\log n)^2)$ time, when $p \geq \frac{c\ln n}{n}$, for some sufficiently large constant (say $c \geq 36$). This is essentially the best possible as far as the sequential running time is concerned as it is almost linear.
The algorithm of Angluin and Valiant is randomized. Bollobas, Fenner, and Frieze \cite{bollobas1987algorithm} give a deterministic sequential algorithm for finding Hamilton cycles in random graphs (in the related $G(n,M)$ random graph model, which is a uniform distribution over all graphs on $n$ vertices and $M$ edges), but the running time is essentially  $O(n^4)$ and succeeds with high probability (in graphs where the number of edges is above the threshold of existence of Hamiltonian cycle).
In the context of parallel algorithms,
MacKenzie  et al in \cite{mackenzie1993optimal} proposed a parallel algorithm which uses
$O(\frac{n}{\log* n})$ processes and runs in $O(\log^* n)$ time.
In the {\em distributed setting}, the only prior work we are aware of is the work of Levy et al \cite{levy2004distributed} which  gives a distributed algorithm to find a HC in
$O(n^{\frac{3}{4} + \epsilon})$ time when $p  =  \omega(\frac{\sqrt{\log{n}}}{n^{\frac{1}{4}}})$.

In this paper,  we propose a {\em fast, efficient, and fully decentralized} (as defined earlier) distributed algorithm
that finds a HC with high probability and runs in time significantly faster than the prior work of \cite{levy2004distributed} as well as works for all ranges of $p$; in particular, the denser the graph, the faster will be our algorithms. Our distributed algorithms that run on random graphs  are themselves randomized (i.e., they make random choices during the course of the algorithm) and hence the high probability bounds are both  with respect to the random input and the random choices of the algorithm.

We give a brief overview of our results. In Section \ref{sec:ouralg}, we give two fast (truly sublinear in $n$), efficient and fully decentralized algorithms. The first algorithm, is a bit simpler, and works for $p \geq \frac{c\ln n}{\sqrt{n}}$ and runs in $\tilde{O}(\sqrt{n})$ rounds.
The second algorithm works for $p = \frac{c\ln n}{n^{\delta}}$, for any fixed constant $\delta \in (0,1)$, and for a suitably large constant $c$, and runs in $\tilde{O}(n^{\delta})$ rounds. (Our algorithm will also work
for $\delta = 1$, with running time $\tilde{O}(n)$.) Both algorithms work in the CONGEST model and are fully distributed, i.e., no node (or a few nodes) does all the computation  (since the memory size of each node is restricted to be $o(n)$ --- cf. Section \ref{sec:model}). In contrast, in Section \ref{sec:upcastalg}, we present a (conceptually) much simpler {\em upcast} algorithm that uses a fairly generic ``centralized" approach. In this algorithm, each node samples $\Theta(\log n)$ random  edges among all its incident edges and upcasts it to a central node (which is the root of a Breadth First Tree) which locally computes a HC and then broadcasts the HC edges back to the respective nodes by downcast. Note that, in this approach, all the non-trivial (local) computation is done at a central node and hence the algorithm is not fully distributed (some node needs at least $\Omega(n)$ memory), although the algorithm works in the CONGEST model.
We show that this algorithm also runs in time $\tilde{O}(n^{\delta})$ rounds for $p = \frac{c\ln n}{n^{\delta}}$.

Before we describe our algorithms, we detail our distributed computing Model (Section \ref{sec:model})
and discuss other related work (Section \ref{sec:related}). We conclude with open questions in Section \ref{sec:conc}.

\subsection{Distributed Computing Model}
\label{sec:model}
We model the communication network as an undirected, unweighted, connected graph $G = (V, E)$, where $|V| = n$ and $|E| = m$. Every  node has limited initial knowledge. Specifically, assume that each node is associated with a distinct identity number  (e.g., its IP address).
At the beginning of the computation, each node $v$ accepts as input its own identity number and the identity numbers of its neighbors in $G$.
We also assume that the number of nodes and edges i.e., $n$ and $m$ (respectively) are given as inputs. (In any case, nodes can compute them easily through broadcast in $O(D)$, where $D$ is the network diameter.) The nodes are only allowed to communicate through the edges of the graph $G$. We assume that the communication occurs in  synchronous  {\em rounds}.
(In particular, all the nodes wake up simultaneously at the beginning of round 1, and from this point on the nodes always know the number of the current round.)
We will use only small-sized messages. In particular, in each round, each node $v$ is allowed to send a message of size $O(\log n)$ bits
through each edge $e = (v, u)$ that is adjacent to $v$.\footnote{Our algorithms can be easily generalized if $\mathbb{B}$ bits  are allowed (for any pre-specified parameter $\mathbb{B}$) to be sent through each edge in a round. Typically, as assumed here, $\mathbb{B} = O(\log n)$, which is the number of bits needed to send a node id in a $n$-node network.}  The message  will arrive to $u$ at the end of the current round.
This is a  widely used  standard model known as the {\em CONGEST model} to study distributed algorithms (e.g., see \cite{peleg-book,PK09}) and captures the bandwidth constraints inherent in real-world computer  networks.

We  focus on minimizing the  {\em running time}, i.e., the number of {\em rounds} of distributed communication. Note that the computation that is performed by the nodes locally is ``free'', i.e., it does not affect the number of rounds; however, as mentioned earlier, we will only perform sublinear (in $n$) cost computation locally  at any node.

We note that in the CONGEST model, it is rather trivial to solve a problem in $O(m)$ rounds, where $m$ is the number of edges in the network, since the entire topology (all the edges) can be collected at one node and the problem solved locally. The goal is to design faster algorithms.
Our algorithms work in the {\em CONGEST} model of distributed computing.
  We note that our bounds are non-trivial in the CONGEST model.\footnote{In contrast, in the LOCAL model --- where there is no bandwidth constraint --- all problems can be trivially solved in $O(D)$ rounds by collecting all the topological information at one node.}

In Section \ref{sec:ouralg}, we consider fully-distributed algorithms, where there is a restriction on
the amount of memory each node can have: each node is allowed only $o(n)$ memory. This restriction, in effect, rules out ``centralized" approaches such as collecting global information at one particular node and then locally solving the problem. In our fully-distributed algorithms, each node's (local) computation is more or less balanced.
Fully-distributed algorithms are quite useful, since they can be be efficiently converted to work
in other distributed models for Big Data computing such as the $k$-machine model \cite{soda15} as well as MapReduce \cite{soda10}.

In Section \ref{sec:upcastalg}, we consider algorithms where
we don't have any restriction on the memory size at any node nor we restrict the local computation cost to be sublinear (note that this restriction turns out to be not so important for the bounds that we obtain, as one can run sublinear cost local computation over sublinear number of rounds). However, the algorithms  still follow  the CONGEST model (i.e., there is bandwidth restriction).

We make a note on the output of our distributed algorithms: at the end, each node will know which of its incident edges belong to the HC (exactly two of them).


\begin{figure}[t]
  \label{fig:dhc1}
  \centering
  \begin{subfigure}[b]{0.4\linewidth}
    \includegraphics[width=\linewidth]{./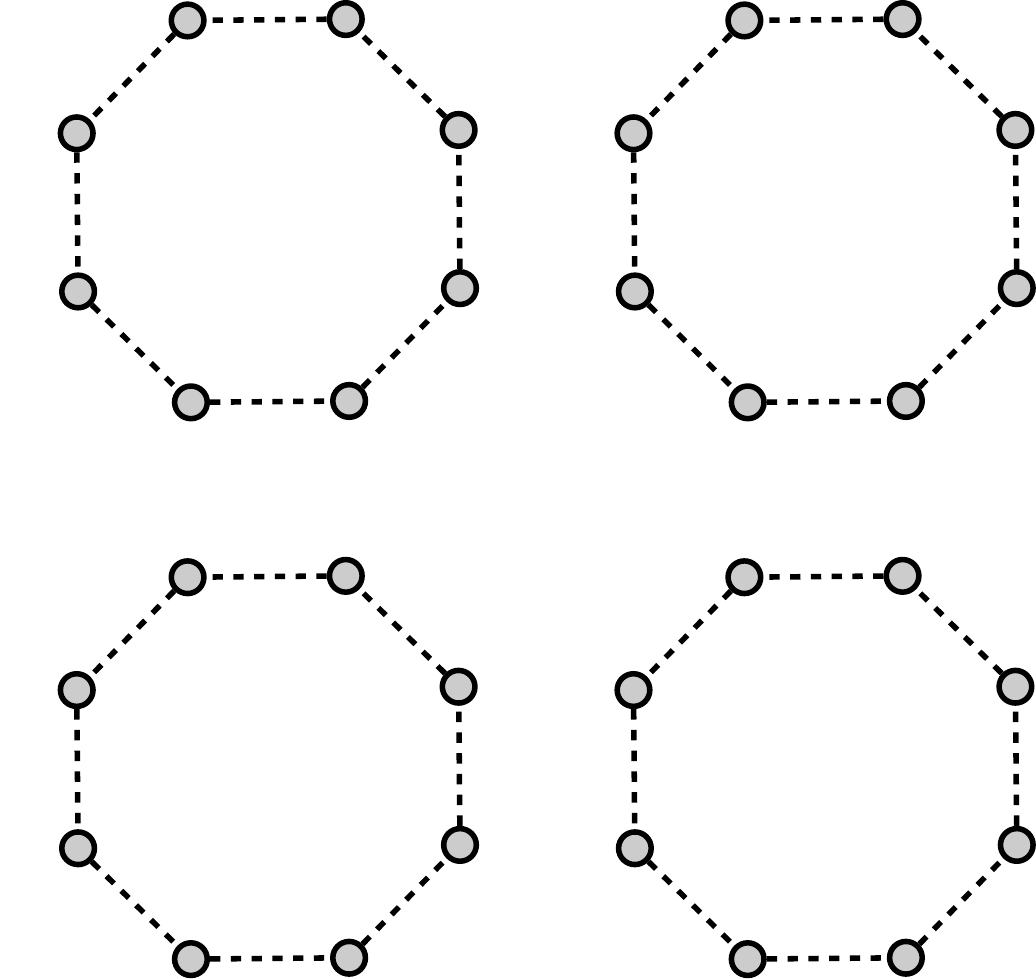}
    \caption{Phase1}
  \end{subfigure}
  \vline width 1pt
  \begin{subfigure}[b]{0.4\linewidth}
    \includegraphics[width=\linewidth]{./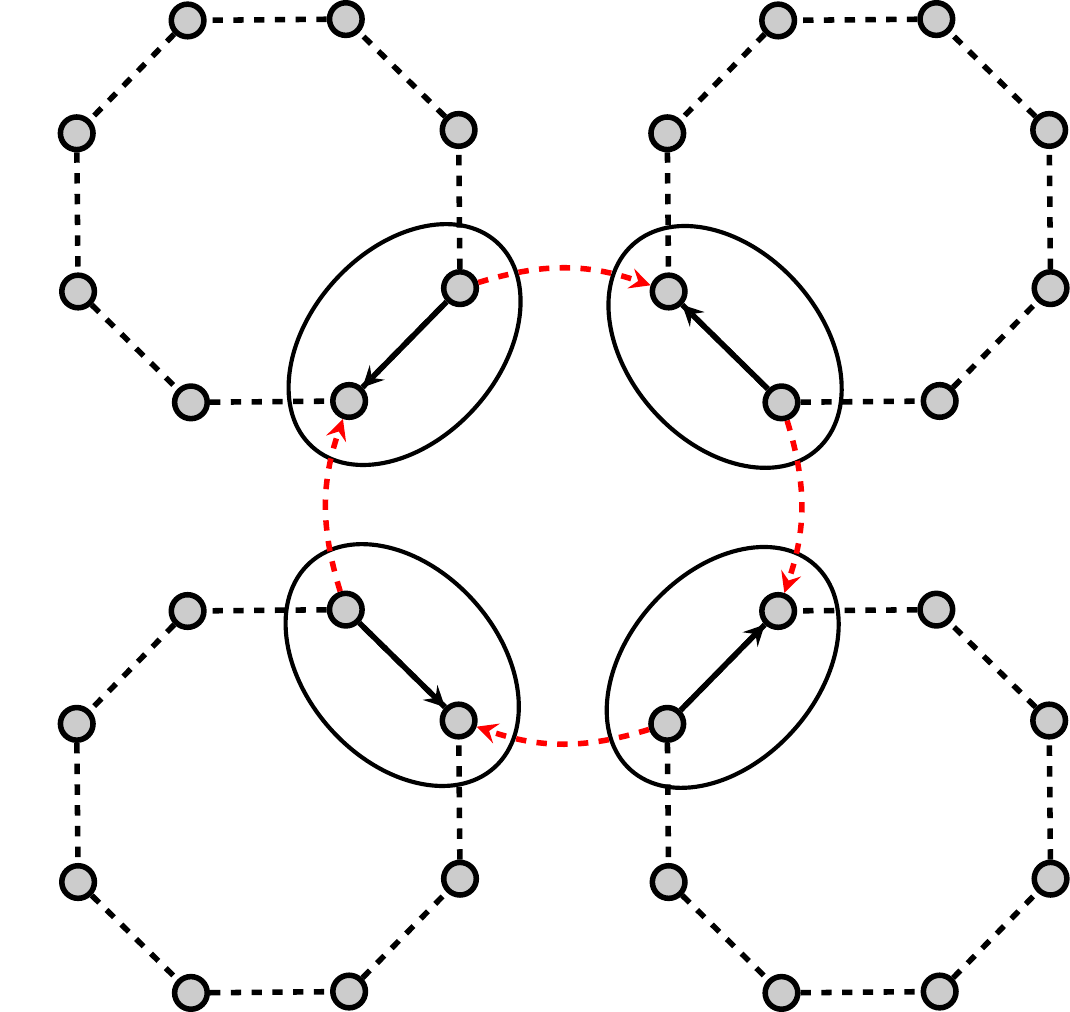}
    \caption{Phase2}
  \end{subfigure}
  \caption{Algorithm DHC1 builds HC in two phases. Phase 1 constructs $\sqrt{n}$ sub HCs in parallel. Phase 2 combines all sub HCs by building a HC over the graph of hyper nodes.}
  \label{fig:hc}
\end{figure}

\subsection{Other Related Work} \label{sec:related}

There are several algorithms for finding a HC in random graphs (both $G(n,p)$ and its closely related variant $G(n,M)$ random graphs), e.g., we refer to the survey due to Frieze \cite{frieze1988finding}.
There also have been work on parallel algorithms for finding Hamiltonian cycles in $G(n,p)$ random graphs.  Frieze \cite{frieze1987parallel} proposed two algorithms for EREW-PRAM machines: the first uses $O(n \log n)$ processors and runs in  $O(\log^2 n)$ time, while the second one uses $O(n \log^2 n )$ processors and runs ins $O((\log \log n)^2)$ time. MacKenzie and Stout \cite{mackenzie1993optimal} gave an algorithm for Arbitrary CRCW-PRAM machines that operates in $O(\log^* n)$ average time and requires $O(n/\log^* n)$ processors. All these parallel algorithms assume $p$ is a constant.

With regard to distributed algorithms, as mentioned earlier, the only prior work we are aware of is the work of Levy et al.\cite{levy2004distributed} which  gives a fully distributed algorithm to find a HC in
$O(n^{\frac{3}{4} + \epsilon})$ time when $p  =  \omega(\frac{\sqrt{\log{n}}}{n^{\frac{1}{4}}})$.
Their algorithm (based on the algorithm of MacKenzie and Stout \cite{mackenzie1993optimal}) works in three
phases: finding an initial cycle, finding $\sqrt{n}$ disjoint paths, and finally patching paths
into the cycle to build the  HC. Our fully distributed algorithms (Section \ref{sec:ouralg}) follow a different and a simpler approach and are significantly faster, while working for all ranges of
$p$ above the HC threshold.

\section{Fully-Distributed Algorithms}
\label{sec:ouralg}
In this section, we give two fast, efficient, fully-distributed algorithms
for the Hamiltonian cycle problem.
The first algorithm, in Section \ref{sec:sqrtn}, is a distributed algorithm for the  case of
$p = \frac{c\ln n}{\sqrt{n}}$ (throughout, $c$ will be a large enough constant, say bigger than 54)
and runs in time $\tilde{O}(\sqrt{n})$ rounds whp. (In fact, the algorithm will work for any $p \geq
\frac{c\ln n}{\sqrt{n}}$, but for simplicity we will fix $p = \frac{c\ln n}{\sqrt{n}}$.)
This algorithm works in the CONGEST model and is fully distributed, i.e., each node's local
computation memory is $o(n)$ and the computation cost per node per round is also $o(n)$.
This algorithm is somewhat simpler, contains some of the main ideas, and is also useful in  understanding the second algorithm.
The second algorithm, in Section \ref{subsec:algp}, is more general, and works for $p = \frac{c\ln n}{n^{\delta}}$, for any $0 < \delta \leq 1$ and runs in $\tilde{O}(n^{\delta})$ rounds. Both algorithms have two phases; while the first phase is similar for both algorithms, the second phase for the second algorithm is more involved.

Before we go into the details of our algorithms, we will give the main intuition.
Our algorithm is inspired by the well-studied {\em rotation} algorithm (the rotation is a simple operation described in Section \ref{sec:sqrtn}) that was used by Angluin and Valiant to develop a fast sequential algorithm for the $G(n,p)$ random graph for $p \geq \frac{c\ln n}{n}$ (for some suitably large constant $c$, say $c > 36$). However, this algorithm seems inherently sequential, since it tries to extend the cycle one edge at a time; hence the running time under this approach is at least $\Omega(n)$. To get a sublinear time, we follow a two-phase strategy which works in somewhat denser graphs, i.e., $p =  \frac{c\ln n}{n^{\delta}}$, for any $0 < \delta < 1$. In Phase 1, we partition the graph into {\em disjoint} random subgraphs each of size
(approximately) $\Theta(n^{\delta})$ (there will be $\Theta(n^{1-\delta})$ subgraphs). The intuition behind this partition is that
each subgraph will have a HC  of its own (of length equal to the size of the subgraph) whp, since it satisfies the threshold for Hamiltonian cycle (note that $p = \frac{c\ln n}{n^{\delta}}$). We use a distributed implementation of the rotation algorithm to find the Hamiltonian (sub)cycles independently in each of subgraphs --- this takes time essentially linear in the size of the subgraphs, i.e., $\tilde{O}(n^{\delta})$. In Phase 2, we stitch the cycles without taking too much additional time, i.e., in $\tilde{O}(n^{\delta})$ time. When $p = \frac{c\ln n}{\sqrt{n}}$,
the case is special, since the number of subgraphs and the size of each subgraph is balanced, so the stitching can be done by essentially implementing a modification of Phase 1 as follows.
Take two adjacent nodes from each subgraph cycle and find a Hamiltonian cycle between the chosen nodes (this has to be done carefully, so that it can be combined with the subgraph cycles to form a HC over all the nodes). Since $p =  \frac{c\ln n}{\sqrt{n}}$, and the number of chosen nodes is $\Theta(\sqrt{n})$, whp a HC exists between the chosen nodes and we can find it using a strategy similar to Phase 1.
For general $p$,
we note, that we cannot just simply stitch
as described above, since $p$ is much smaller than the needed threshold.
Hence,
we do the stitching in stages, as described in Section \ref{subsec:algp}.

\begin{figure}[t]
  \centering
  \includegraphics[width=\linewidth]{./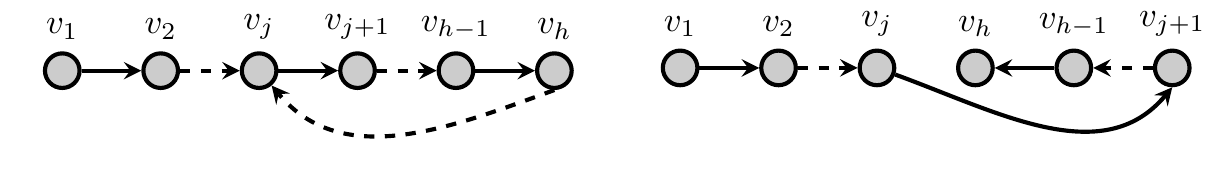}
  \caption{Path Rotation: Extending from the head node ($v_h$), we encounter a node ($v_j$) on the path. The right side shows  the rotated path.
}
\label{fig:rotate}
\end{figure}

\subsection{The Algorithm for $p = \frac{c\ln n}{\sqrt{n}}$}
\label{sec:sqrtn}
Our first algorithm, called the {\em Distributed Hamiltonian Cycle Algorithm 1  (DHC1)}, works  for a random graph $G(n,p=\frac{c\ln n}{\sqrt{n}})$, where $c$ is a suitably large constant.

\subsubsection{High-Level Description of DHC1}
Given a random graph $G(n,\frac{c\ln n}{\sqrt{n}})$, our algorithm works in two phases. In Phase $1$, the graph is
partitioned into $\sqrt{n}$ subgraphs $G_i$, each of $\Theta(\sqrt{n})$ nodes. Then each subgraph
constructs its own Hamiltonian cycle $C_i$, independently in parallel. In Phase 2, the algorithm  finds a Hamiltonian cycle
connecting $C_1,\cdots,C_{\sqrt{n}}$. This is done as follows: for
each $C_i$,  pick only one edge $e_i = (v_i, u_i)$, call this a hypernode (edges inside oval shapes in Figure \ref{fig:hc}). Consider the graph
$G'$ of $\sqrt{n}$ hypernodes $e_i$, a hypernode uses $u_i$
as the \textit{incoming} port, and $v_i$ as the \textit{outgoing} port. In other words, we only
look at the edges $(v_j, u_i)$ and $(v_i, u_j)$ for any pair $e_i \neq e_j$. The algorithm constructs a
Hamiltonian cycle in $G'$ which is easy to see completes the Hamiltonian cycle in $G$ (See Figure \ref{fig:hc}).

In Phase 1 (as well as in Phase 2, for constructing a HC in $G'$), the cycles are constructed locally: each node becomes aware
of its predecessor and successor after the construction.
For convenience, each node also maintains an index
of its position in the cycle.
The resulting Hamiltonian cycle is hierarchical. Each node maintains
its index $subcyc$ in the subgraph cycle.
In Phase 2, if a node is part of a hypernode, it maintains an extra
index $hypcyc$ in the cycle constructed in Phase 2. When traversing the cycle, if a node has a $hypcyc$ link, follow it, otherwise follow the $subcyc$ link.

We next describe the distributed algorithm for constructing a HC in the $\sqrt{n}$-sized subgraph.
This distributed algorithm which we call {\em Distributed Rotation Algorithm (DRA)} is based on the well-known  randomized algorithm for finding a Hamiltonian cycle that uses so called {\em rotation} steps \cite{mitzenmacher2017probability}  (See Figure \ref{fig:rotate}).

\subsubsection{The DRA algorithm}
Consider a graph $G$ with $n$ nodes. We construct a Hamiltonian path $v_1,v_2,\cdots,v_n$;
if there is an edge connecting $v_n$ and $v_1$, then we have a Hamiltonian cycle. We will grow the
path sequentially by a simple randomized algorithm. For a path $v_1,\cdots,v_h$, let $v_h$ be the
\textit{head}. Initially, we choose a random $v_1$ which is also the initial \textit{head}. The
\textit{head} picks a random edge $(v_h,u)$, say, which has not previously been used.

If $u \not \in \{v_1, \cdots, v_h\}$, add node $u$ to the path and set it as the new \textit{head}.
If $u$ is some $v_j$, then we \textit{rotate} the path: $v_1,\cdots,v_j,v_{j+1},\cdots,v_h$ becomes
$v_1,\cdots, v_j, v_h, v_{h-1},\cdots, v_{j+1}$ and $v_{j+1}$ is the new head. The rotation can be
implemented by  just a renumbering: for $v_i$, where $j+1 \leq i \leq h$, reassign $i \gets h + j + 1 - i$.
In a distributed setting,
we can implement an efficient procedure: $v_j$ broadcasts the values $h$ and $j$ then every node
can renumber itself accordingly. Notice that the required time for broadcast is the diameter
$D$ of the graph, and we will give bounds for $D$ in the analysis.
\begin{algorithm*}[tbh]
\caption{Distributed Rotation Algorithm (DRA) Algorithm}
\label{alg:rdhc}
\begin{algorithmic}[1]
  \Function{DRA}{$G(V,E), cycindex$} \Comment{code for each node $v \in V$,
                                      use $cycindex$ for path index}
    \Init
      \State $v.unused \gets$ all edges to neighbors
      \State $v.cycindex \gets 0$
      \State only one $v$ becomes $head$, $v.cycindex \gets 1$
    \EndInit
    \While {$v.unused \neq \varnothing$}
      \If {$v$ is $head$}
        \State $(v,u) \gets$ random edge from $v.unused$
        \State $v.unused \gets v.unused - \{(v,u)\}$
        \State send to $u$: $progress(pos = v.cycindex)$
      \EndIf
      \OnReceive \ message $progress(pos)$
        \State \textbf{if} $pos=|V|$ and $v.cycindex=1$ \textbf{then} return $Success$
        \State $v.unused \gets v.unused - \{(sender,v)\}$
        \If {$v.cycindex = 0$}	\Comment {first time visiting $v$}
          \State become $head$: $v.cycindex \gets pos + 1$
        \Else	\Comment {$v$ is already on the path}
          \State broadcast: $rotation(h = pos, j = v.cycindex)$
        \EndIf
      \EndReceive
      \OnReceive \ message $rotation(h,j)$
        \If {$j<v.cycindex\leq h$}
          \State $v.cycindex \gets h + j + 1 - v.cycindex$
          \If {$v.cycindex = h$}
            \State $v$ becomes $head$
          \EndIf
        \EndIf
      \EndReceive
    \EndWhile
    \Return
  \EndFunction
\end{algorithmic}
\end{algorithm*}

The Distributed Rotation Algorithm (DRA)  is given in Algorithm \ref{alg:rdhc}, where we initialize the algorithm
by assigning any one node to be the $head$. The DHC1 algorithm pseudocode is given in Algorithm \ref{alg:pdhc}.
Notice that we initialize $2$ position indexes for each node, and construct the (overall)
Hamiltonian cycle by multiple calls to Algorithm \ref{alg:rdhc}.

\begin{algorithm*}[tbh]
\caption{Distributed Hamiltonian Cycle Algorithm 1 (DHC1)}
\label{alg:pdhc}
\begin{algorithmic}[1]
  \Function{DHC1}{$G(V,E)$}
    \Init
      \State $n \gets |V|$
      \foreach \ $v \in V$: set $v.subcyc$ and $v.hypcyc$ to $0$
      \endForeach
    \EndInit
    \Phase \ $1$
      \State $v.color \gets random[1,\cdots,\sqrt{n}]$
      \State $G_i(V_i,E_i)$ is a subgraph with nodes in color $i$
      \foreach \ $G_i$:
        \State $C_i \gets$ \Call{DRA}{$G_i, cycindex = subcyc$}
      \endForeach
    \EndPhase
    \Phase \ $2$
      \foreach \ $C_i$:
        \State pick a random $u_i \in C_i$
        \State $v_i \gets predecessor(u_i)$
        \State $hypernode_i \gets [u_i, v_i]$
        \State $G'$: graph of all $hypernode_i$, edges: all pairs $(v_j,u_k), j \neq k$
      \endForeach
      \State $C' \gets$ \Call{DRA}{$G', cycindex = hypcyc$}
    \EndPhase
    \State \Return
  \EndFunction
\end{algorithmic}
\end{algorithm*}

\subsubsection{Analysis}
We first state the main theorem, which gives the probability of success and the expected runtime
of the DHC1 algorithm.
\begin{theorem}
  \label {the:pdhc}
  For a $G(n,p)$ with $p = \frac{c \ln n}{\sqrt{n}}$ with $c \geq 86$, the DHC1 algorithm successfully
  builds a Hamiltonian cycle with probability $(1  -  O(\frac{1}{n}))$, in $O(\sqrt{n}
  \frac{\ln^2{n}}{\ln{\ln{n}}})$ rounds.
\end{theorem}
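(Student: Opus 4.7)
The plan is to analyze the two phases separately, showing that with high probability (a) each phase's random graph structure is above the Hamiltonian cycle threshold, (b) the \textsc{DRA} subroutine finds a Hamiltonian cycle on the relevant graph quickly, and (c) the diameter-dependent cost of each distributed rotation is small. I will then union bound the failure events.

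For Phase~1, I would first apply a standard Chernoff bound to the random coloring step: each color class has size concentrated around $\sqrt{n}$, so every $G_i$ has $(1\pm o(1))\sqrt{n}$ vertices whp, with failure probability $O(1/n^2)$ after a union bound over the $\sqrt{n}$ classes. Conditioning on the coloring, each $G_i$ is distributed as $G(n_i, p)$ with $n_i=\Theta(\sqrt{n})$ and $p = c\ln n/\sqrt n = \Theta(\ln n_i / n_i)$, so (for $c\geq 86$, large enough to beat the Angluin--Valiant rotation-algorithm constant) $G_i$ lies comfortably above the HC existence threshold and the rotation algorithm succeeds on it whp. I would invoke (a distributed reading of) the classical Angluin--Valiant analysis to bound the number of rotation/extension steps by $O(n_i \log n_i) = O(\sqrt n \log n)$ on each $G_i$, each step failing with at most $n_i^{-\Omega(1)}$ probability, so a union bound over steps and over the $\sqrt n$ subgraphs still gives failure $O(1/n)$.

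Next I would translate algorithmic steps into rounds. A rotation requires broadcasting the pair $(h,j)$ inside $G_i$, which takes $O(\mathrm{diam}(G_i))$ rounds. A standard random-graph diameter bound for $G(N, c\ln N/N)$ gives $\mathrm{diam}(G_i) = O(\log n_i / \log\log n_i) = O(\log n / \log\log n)$ whp; a BFS tree rooted at the initial head can be precomputed once per subgraph in the same order. Since the $\sqrt n$ subgraphs are vertex-disjoint they execute in parallel without any CONGEST contention. Combining, Phase~1 finishes in $O(\sqrt n \log n) \cdot O(\log n / \log\log n) = O\!\bigl(\sqrt n \log^2 n / \log\log n\bigr)$ rounds, matching the claimed bound.

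For Phase~2, the key observation is deferred decisions: the edges used in Phase~1 all lie inside color classes, while the edges of $G'$ (edges of the form $(v_j, u_k)$ with $j\neq k$) lie strictly between color classes. Conditioned on the Phase~1 transcript, these inter-class edges are still independent $\mathrm{Bernoulli}(p)$ random variables, and the chosen ports $(u_i, v_i)$ are measurable with respect to intra-class edges only. Hence $G'$ on $\sqrt n$ hypernodes is (stochastically) a $G(\sqrt n, p)$ random (directed) graph with $p = c\ln n/\sqrt n \geq 2c\ln\sqrt n/\sqrt n$, again above the rotation-algorithm threshold. Running \textsc{DRA} on $G'$ then yields, by the same analysis as in Phase~1 (number of steps $O(\sqrt n \log n)$, diameter $O(\log n/\log\log n)$), a Hamiltonian cycle $C'$ in $\tilde O(\sqrt n)$ rounds whp. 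Concatenating $C'$ with the sub-cycles $C_i$ using the two cached indices ($\mathit{subcyc}$ and $\mathit{hypcyc}$) gives a Hamiltonian cycle of $G$, and a final union bound over all high-probability events bounds the total failure probability by $O(1/n)$.

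The main obstacle I anticipate is the independence argument in Phase~2 together with making the directed/port structure precise: one has to verify that conditioning on the specific cycles $C_i$ and on the (possibly adversarial-looking) choices of $(u_i,v_i)$ does not bias the distribution of the inter-class edges. The clean way is to invoke deferred decisions together with the fact that $(u_i,v_i)$ is chosen as a function of $G_i$ alone. A secondary technicality is bounding the diameter of each $G_i$ (and of $G'$) by $O(\log n/\log\log n)$ whp, which I would get from a standard first/second-moment neighborhood-growth argument for $G(N, c\ln N/N)$.
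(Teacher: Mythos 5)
Your proposal is correct and follows essentially the same route as the paper: concentration of the color-class sizes, the Angluin--Valiant rotation/coupon-collector analysis giving $O(\sqrt{n}\log n)$ steps per subgraph (the paper's Theorem~\ref{the:seq}), a union bound over the $\sqrt{n}$ parallel instances, the hypernode graph $G'$ treated as a $G(\sqrt{n},p')$ instance above the threshold, and multiplication by the $O(\log n/\log\log n)$ diameter to convert steps into rounds. Your explicit deferred-decisions argument for why the inter-class edges of $G'$ remain fresh after conditioning on Phase~1 is in fact more careful than the paper, which simply asserts that $G'$ is a random graph with edge probability $p' = 1-(1-p)^2 \geq p$.
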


The next theorem describes the performance of the Distributed Rotation  Algorithm (DRA), a key subroutine of the DHC1 algorithm. This result
will be used in both Phase 1 and Phase 2 of DHC1 to bound its runtime. To simplify the analysis, we will state the run time in this theorem
in terms of the number of \textit{steps}, where each step is one rotation or growing the path by one node. For the moment, we ignore the cost of broadcast, which we will later account for in the main theorem.

\begin{theorem}
  \label{the:seq}
  Given a $G(n,p)$ graph where $p \geq 86 \frac{\ln n}{n}$, the DRA algorithm
    constructs a Hamiltonian Cycle in $7n\ln{n}$ steps with probability of success $1  -  O(\frac{1}{n^3})$.
\end{theorem}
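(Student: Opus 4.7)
The plan is to show that DRA is essentially a distributed simulation of the classical sequential rotation algorithm of Angluin--Valiant, so the same analysis goes through. Observe that all randomness in DRA lives in (i) the initial head and (ii) each head's choice of a uniformly random element of its unused edge list; the rotation step is a deterministic relabeling. Hence the sequence of (head, path) states is distributed exactly as in the sequential version, and I can analyze it using the \emph{principle of deferred decisions}: reveal the $G(n,p)$ bit for a potential edge only at the moment the algorithm first queries it. As a preliminary, I would apply a Chernoff bound plus a union bound over the $n$ vertices to show that, with probability $1-O(1/n^3)$, every vertex has degree between $pn/2$ and $3pn/2 = \Theta(\log n)$. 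Since we will show the algorithm finishes in $O(n \log n)$ steps and each step uses exactly one edge at the current head, no head ever exhausts its unused list during the run.

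Next I would analyze a single step conditioned on the history. If the current head is $v_h$ and path length is $h$, the set $T_h$ of edges incident to $v_h$ that the algorithm has previously touched is small and known from the history. By deferred decisions, every other potential edge from $v_h$ exists independently with probability $p$, so the edge drawn by the head is a uniformly random untouched neighbor of $v_h$. This edge falls into one of three buckets: (a) an off-path vertex (path extension), (b) the starting vertex $v_1$ with $h=n$ (cycle completion and \textbf{Success}), or (c) an interior path vertex $v_j$ (rotation, yielding new head $v_{j+1}$). The key symmetry that drives the Angluin--Valiant argument is that, conditional on case (c), the identity of $v_{j+1}$ is (close to) uniformly distributed over path-interior vertices whose $v_h$-incidence was still undetermined, because the unrevealed potential edges are exchangeable. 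Rotations themselves do not reveal any edges, so this exchangeability is preserved across steps.

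To bound the running time I would run two coupon-collector-style arguments in sequence. During the \emph{growth phase}, at each step the number of untouched neighbors of the head is $\Theta(\log n)$ and, by deferred decisions, is split between on-path and off-path vertices roughly in proportion $h : (n-h)$; hence the probability of extension is $\Omega((n-h)/n)$. Summing $\sum_{h=1}^{n-1} O(n/(n-h)) = O(n \ln n)$ and applying a standard Chernoff tail on the sum of geometric waiting times gives the growth phase completing in $O(n \ln n)$ steps with failure probability $O(1/n^3)$. During the \emph{closing phase} ($h=n$), each step the head is essentially a uniform sample from the path, so hits a neighbor of $v_1$ with probability $\Omega(1/n)$ (using that $v_1$ has $\Omega(\log n)$ neighbors whp); another $O(n \ln n)$ steps suffice with failure probability $O(1/n^3)$. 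Adding up and tuning the constant against $c \geq 86$ yields the $7 n \ln n$ bound in the statement.

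The main technical obstacle is making the ``rotated head is uniform on the path'' claim rigorous while properly conditioning on everything the algorithm has seen. The set $T_h$ of previously touched edges at the current head changes step to step, and the head's identity itself changes, so I need to maintain a careful invariant: for every ordered pair $(u,w)$ whose edge has not yet been queried, the $G(n,p)$ bit is still independently $\mathrm{Bernoulli}(p)$ conditional on the entire history. I would establish this invariant by induction on the step number, checking that both the extension branch and the rotation branch in Algorithm~\ref{alg:rdhc} reveal at most the single edge $(v_h,u)$ that the head just probed. Once this invariant is in hand, the Angluin--Valiant coupling between the head trajectory and an idealized uniform-rotation Markov chain applies verbatim, and the bounds above follow.
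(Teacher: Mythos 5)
Your proposal takes a genuinely different route from the paper, and that route has a gap at its central step. The paper does not analyze the algorithm ``as run'' via deferred decisions; it reduces to the relaxed algorithm of Mitzenmacher--Upfal, in which each vertex's unused list is populated by an \emph{independent} subsample of its potential edges with probability $q = 1-\sqrt{1-p}\geq p/2$, arranged so that the new head at every step is \emph{exactly} uniform over all $n$ vertices independent of history. With that uniformity in hand, the paper runs a coupon-collector argument ($4n\ln n$ steps to touch every vertex, $3n\ln n$ to close the cycle) and separately uses Chernoff bounds to show no vertex starts with fewer than $21\ln n$ unused edges nor loses more than $21\ln n$ of them in $7n\ln n$ steps. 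All of the delicate conditioning you identify as ``the main technical obstacle'' is absorbed into that $q$-coupling, which the paper cites rather than reproves.

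The gap in your version is the invariant you propose to establish by induction: that each step of Algorithm~\ref{alg:rdhc} reveals only the single edge $(v_h,u)$ that the head probed. This cannot hold for the algorithm as written. The head draws a uniformly random element of $v.unused$, which is initialized to the \emph{actual} set of incident edges; sampling uniformly from that set requires conditioning on the entire incident edge profile of $v_h$, not on one bit. Consequently, after $k$ distinct vertices have served as head, every later head has $k$ of its $n-1$ potential-edge bits already determined --- and these are precisely the bits toward previously active (hence on-path) vertices, which is exactly the direction that biases the on-path/off-path split against your claimed $\Omega((n-h)/n)$ extension probability. The ``rotated head is near-uniform by exchangeability'' claim inherits the same unproven conditioning, and rotations compound it by repeatedly returning old heads to active duty. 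This is the point where Angluin--Valiant (random-order adjacency lists read one entry at a time) and Mitzenmacher--Upfal (the $q$-subsampled lists) each do real, nontrivial work; your induction as sketched does not supply a substitute, so the coupon-collector bounds that follow are not yet justified. A secondary, fixable imprecision: closing the cycle requires the head at position $n$ to select the specific edge to $v_1$, not merely to be adjacent to $v_1$, so the $\Omega(1/n)$ per-step closing probability needs the extra factor of $1/\deg(v_h)$ you implicitly use but do not state.
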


\begin{proof}
  We follow the approach as described in \cite{mitzenmacher2017probability} which we refer to for more details. The main idea is to relate the algorithm to a {\em coupon collector} process, where the goal is to collect $n$ different coupons and in each step the probability of collecting a particular coupon
  is $1/n$ (independent of other coupons) and it is known that all coupons can be collected in $O(n \ln n)$ steps whp.  Here, the $n$ coupons represent the $n$ nodes and collecting all the  coupons is analogous  to building a HC. Since the rotation algorithm does not give
  uniform $1/n$ probability, to apply the coupon collector model, we relax the analysis as follows.

  Considered a relaxed  algorithm such that every node  has equal probability of $\frac{1}{n}$ to
  be chosen in every step of growing the path (this relaxation is described in \cite{mitzenmacher2017probability}). Note that, in fact, the algorithm is more efficient
  in choosing	a new node. We will not restate all the details here, except for the key technique. Remember that the edge probability is $p$, and this implies a dependency between two nodes. Under the relaxed algorithm, let each node have a list of edges, called ``unused'' edges, which is selected independently at random, with probability $q$. The technical part is how to convert $p$ to $q$, such that the ``unused'' edges is a subset of the true edges. All the subtleties can be found in \cite{mitzenmacher2017probability}, for convenience, we cite $q$ here: $q = 1 - \sqrt{1-p} \geq p/2$. We are now ready for the proof, where we want to improve the analysis of \cite{mitzenmacher2017probability}. In particular, by allowing larger runtime, but still in $O(n\ln n)$, we can reduce the failure probability to $O(1/n^3)$. This technique can be extended to achieve failure probability in $O(1/n^\alpha)$, with a given constant $\alpha$.

  The relaxed algorithm has two scenarios of failure:
  \begin{itemize}
    \item $\mathcal{E}_1$: The algorithm runs for $7n\ln n$ steps while no unused edges in any
    vertex becomes empty, and fails to construct a Hamiltonian cycle.
    \item $\mathcal{E}_2$: At least one vertex runs out of unused edges during $7n\ln{n}$ steps.
  \end{itemize}
  For event $\mathcal{E}_1$, equal probability of $1/n$ gives: the probability of not seeing a node
  after $4 n \ln n$ steps is:
  \[ \left(1-\frac{1}{n} \right)^{4n \ln n} \leq \frac{1}{n^4}.
  \]
  Using union bound, the probability of failure to meet all $n$ nodes after $4n\ln{n}$ steps is:
  $O \left ( \frac{1}{n^3}\right )$.

  Now, in order to close the cycle, the head needs to visit the tail, which happens with probability
  $\frac{1}{n}$. After $3n \ln n$ steps, the probability of failure to complete the cycle is at most:
  \[
  \left(1-\frac{1}{n} \right)^{3n \ln n} \leq \frac{1}{n^3}.
  \]
  In total, $Pr(\mathcal{E}_1) \leq \frac{2}{n^3} = O\left(\frac{1}{n^3}\right)$.

  For event $\mathcal{E}_2$, we break it into two sub events:
  \begin{itemize}
    \item $\mathcal{E}_{2.1}$: At least $21 \ln n$ edges are removed from at least one node
    during $7n \ln n$ steps.
    \item $\mathcal{E}_{2.2}$: At least one node has fewer than $21\ln{n}$ edges in its initial
    unused list.
  \end{itemize}
  Consider $\mathcal{E}_{2.1}$ and look at a node $v$. Let $X$ be the number of edges removed at $v$ during $7n \ln n$ steps.
  We have $E[X]= \frac{1}{n} * 7n \ln n = 7\ln n$. Using Chernoff bound,
  \begin{align*}
    Pr(X\geq 21\ln n)) &= Pr(X \geq (1 + 2)7 \ln n) \\
    &\leq \left(\frac{e^2}{3^3} \right)^{7\ln n}  \leq \left( \frac{1}{e^{4/7}}\right)^{7\ln n}= O\left(\frac{1}{n^4}\right).
  \end{align*}
  Using union bound, $Pr(\mathcal{E}_{2.1}) = O\left(\frac{1}{n^3} \right)$.

  Consider $\mathcal{E}_{2.2}$.
  Let $Y$ be the initial number of edges in the unused edges list of a node.
  We have $E[Y] = q(n-1) \geq (43 \frac{\ln n}{n} ) (n-1) \geq 42 \ln n$. Using Chernoff's bound:
  \begin{align*}
    & Pr(Y \leq 21\ln n) = Pr(Y\leq (1 - \frac{1}{2})42 \ln n) \\
    & \leq \exp \left(-\frac{\left( \frac{1}{2}\right)^2 42\ln n}{2} \right) =
    O\left( \frac{1}{n^4}\right).
  \end{align*}

  Using union bound for $n$ nodes, $Pr(\mathcal{E}_{2.2}) = O\left(\frac{1}{n^3} \right)$.

  Union over the failure events, the failure probability is less than:
    $Pr(\mathcal{E}_1)+Pr(\mathcal{E}_{2.1}) + Pr(\mathcal{E}_{2.2}) = O(\frac{4}{n^3})$.
\end{proof}

Having analyzed the DRA algorithm, we return to the discussion
of our DHC1 algorithm.

\textbf{Analysis of Phase $1$:} Each subgraph $G_i$ uses the DRA algorithm
to independently construct (in parallel) its Hamiltonian cycle $C_i$. Because each subgraph performs the algorithm independently,
this phase is fully parallelized, and the expected runtime will be the expected runtime of
the largest subgraph. For the failure probability, we can simply use a union bound. We state the
following theorem for  Phase 1.

\begin{lemma}
  \label{lem:ph1}
  For a $G(n,p)$ with $p\geq \frac{c \ln n}{\sqrt{n}}$ where $c\geq 86$, Phase 1 of the algorithm
  succeeds with probability $1-O(1/n)$, in $O(\sqrt{n}\ln{n})$ steps.
\end{lemma}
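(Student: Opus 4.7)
The plan is to reduce the analysis of Phase 1 to a direct application of Theorem~\ref{the:seq} on each color class, handling two preliminary technical points: (i) showing that the random coloring produces color classes of size $\Theta(\sqrt{n})$ whp, and (ii) showing that the induced subgraph on each color class is itself a sufficiently dense $G(n_i,p)$ random graph, so that DRA's success guarantee applies.

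First I would fix a color $i \in \{1,\ldots,\sqrt{n}\}$ and let $n_i = |V_i|$. Since each node independently picks a color uniformly at random from $\sqrt{n}$ colors, $n_i$ is a sum of $n$ i.i.d.\ Bernoulli$(1/\sqrt{n})$ variables with mean $\sqrt{n}$. A standard Chernoff bound gives $\Pr[n_i \notin (\tfrac{1}{2}\sqrt{n}, 2\sqrt{n})] \leq e^{-\Omega(\sqrt{n})}$, and a union bound over the $\sqrt{n}$ color classes keeps this negligible. Next, conditioned on the color assignment, the induced subgraph $G_i$ on $V_i$ is exactly a $G(n_i,p)$ random graph, because the coloring is independent of the edges of $G$. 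Since $n_i \leq 2\sqrt{n}$, we have $\ln n_i \leq \tfrac{1}{2}\ln n + \ln 2$, so
\[
p \;=\; \frac{c\ln n}{\sqrt{n}} \;\geq\; \frac{c\ln n}{n_i} \;\geq\; \frac{86\ln n_i}{n_i}
\]
for $c \geq 86$ (with room to spare), which is precisely the hypothesis required by Theorem~\ref{the:seq} applied to $G_i$.

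Having verified the hypothesis, Theorem~\ref{the:seq} gives that DRA constructs a Hamiltonian cycle on $G_i$ in at most $7 n_i \ln n_i = O(\sqrt{n}\ln n)$ steps, with failure probability $O(1/n_i^3) = O(1/n^{3/2})$. Since all $\sqrt{n}$ subgraphs run DRA in parallel and independently (each color class uses only edges internal to itself, so there is no contention on edges across color classes), the number of steps for Phase 1 is the maximum over the subgraphs, which is $O(\sqrt{n}\ln n)$. A union bound over the $\sqrt{n}$ color classes, combined with the negligible probability that the coloring is unbalanced, yields a total failure probability of $O(\sqrt{n}/n^{3/2}) + e^{-\Omega(\sqrt{n})} = O(1/n)$, as claimed.

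The main obstacle I anticipate is the cleanest way to phrase the conditioning argument in step (ii): one must carefully note that the edge set of $G$ and the random coloring are independent, so that conditioning on the coloring preserves the $G(n_i,p)$ distribution on $G_i$. Once this is spelled out, the rest is bookkeeping with Chernoff and union bounds. A minor secondary subtlety is that Theorem~\ref{the:seq} is stated in terms of ``steps'' and not rounds, but the lemma is also stated in steps, so broadcast costs need not be absorbed here (they will be dealt with in the final Theorem~\ref{the:pdhc}).
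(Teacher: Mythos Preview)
Your proposal is essentially the paper's own proof: Chernoff on the color-class sizes (the paper's Lemma~\ref{lem:partsize}), verification that each $G_i$ meets the density hypothesis of Theorem~\ref{the:seq} (the paper's Lemma~\ref{lem:subhc}), and a union bound over the $\sqrt{n}$ classes. The structure, the constants, and even the observation that ``steps'' rather than ``rounds'' suffice here all match.

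One small algebraic slip: in your displayed chain
\[
p \;=\; \frac{c\ln n}{\sqrt{n}} \;\geq\; \frac{c\ln n}{n_i} \;\geq\; \frac{86\ln n_i}{n_i},
\]
the first inequality requires $n_i \geq \sqrt{n}$, but your concentration only gives $n_i > \tfrac{1}{2}\sqrt{n}$. The conclusion $p \geq 86\ln n_i / n_i$ is still true, but the intermediate step fails when $n_i < \sqrt{n}$. The paper patches this with a short case split on whether $a = n_i/\sqrt{n}$ lies in $[\tfrac{1}{2},1)$ or $[1,\tfrac{3}{2}]$; you should do likewise (or just compute directly, since the factor-of-two slack is easily absorbed).
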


To prove Lemma \ref{lem:ph1}, we will show that each partition has size in $\Theta(\sqrt{n})$ and
is sufficiently dense for the success of the DRA algorithm. In particular, we
introduce the following:
\begin{definition}
  Let $\mathcal{A}$ be the event that all partitions have size $a\sqrt{n}$, where
  $a\in[\frac{1}{2},\frac{3}{2}]$.
\end{definition}

\begin{lemma}
  \label{lem:partsize}
  DHC1 algorithm in Phase 1 (line 5)  partitions nodes such that event $\mathcal{A}$ happens with probability at least $1-O(\frac{1}{n})$.
  \begin{proof}
    Consider any single color. Let $X$ be a random variable representing the number of nodes with that
    color. Let $X_i, i=1,\cdots, n$ be indicator random variables of values $0,1$: $X_i = 1$ if
    node $i$ choses that color, $X_i=0$ otherwise. By linearity of expectation, we have $E[X] =
    E[\sum X_i] = \sum E[X_i] = n \frac{1}{\sqrt{n}} = \sqrt{n}$.

    In order to show that $X$ is concentrated around its expectation, $\frac{1}{2} E[X] \leq
    X \leq \frac{3}{2} E[X]$, we apply Chernoff bound:
    \begin{align*}
      Pr(|X - \sqrt{n}|\geq \frac{1}{2} \sqrt{n}) \leq 2e^{\frac{- (\frac{1}{2})^2\sqrt{n}}{3}} =
      2e^{\frac{- \sqrt{n}}{12}}.
    \end{align*}
    With $\sqrt{n}$ partitions, by union bound, we have:
    \begin{align*}
      Pr(\neg \mathcal{ A}) \leq \sqrt{n} \times 2e^{\frac{- \sqrt{n}}{12}} = O\left(\frac{1}{n}\right).
    \end{align*}
  \end{proof}
\end{lemma}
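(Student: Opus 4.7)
The plan is to show that each of the $\sqrt{n}$ color classes is, with very high probability, of size in $[\tfrac{1}{2}\sqrt{n}, \tfrac{3}{2}\sqrt{n}]$, and then take a union bound over all colors. I will not need independence across color classes for the union bound (which is fortunate, since the class sizes are constrained to sum to $n$).

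First, I would fix any color $i \in \{1,\ldots,\sqrt{n}\}$ and let $X_i$ denote the number of nodes that pick color $i$ in line~5 of Phase~1. Writing $X_i = \sum_{v \in V} \mathbf{1}[v \text{ chooses color } i]$ and noting that each node picks color $i$ independently with probability $1/\sqrt{n}$, the indicator sum is binomial with mean $E[X_i] = n \cdot \tfrac{1}{\sqrt{n}} = \sqrt{n}$. Second, I would apply a standard multiplicative Chernoff bound with deviation parameter $\delta = 1/2$ to the sum of independent $0/1$ random variables $X_i$ to conclude
\[
\Pr\!\bigl(|X_i - \sqrt{n}| \geq \tfrac{1}{2}\sqrt{n}\bigr) \;\leq\; 2\exp\!\bigl(-\tfrac{(1/2)^2 \sqrt{n}}{3}\bigr) \;=\; 2\exp(-\sqrt{n}/12),
\]
which is super-polynomially small in $n$.

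Third, I would apply a union bound over all $\sqrt{n}$ colors. Letting $\mathcal{A}_i$ be the event that color class $i$ has size in $[\tfrac{1}{2}\sqrt{n}, \tfrac{3}{2}\sqrt{n}]$, we have $\mathcal{A} = \bigcap_i \mathcal{A}_i$, so
\[
\Pr(\neg \mathcal{A}) \;\leq\; \sum_{i=1}^{\sqrt{n}} \Pr(\neg \mathcal{A}_i) \;\leq\; \sqrt{n}\cdot 2\exp(-\sqrt{n}/12) \;=\; O(1/n),
\]
which is the desired bound (in fact, much stronger than $O(1/n)$).

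There is no real technical obstacle here; the argument is entirely a routine concentration plus union bound. The only thing to be careful about is that union bound requires no independence between the color classes, so the constraint $\sum_i X_i = n$ does not cause any difficulty. The choice of constant $1/2$ in the definition of event $\mathcal{A}$ could be replaced by any constant in $(0,1)$ and the same argument would go through, which is useful to note since the downstream DRA invocation in Phase~1 only needs the partition sizes to be $\Theta(\sqrt{n})$ with a matching $p = \Omega(\ln n/\sqrt{n})$ threshold.
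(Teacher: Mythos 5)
Your proposal is correct and follows essentially the same route as the paper's own proof: a multiplicative Chernoff bound on the size of a single color class (mean $\sqrt{n}$, deviation parameter $1/2$, failure probability $2e^{-\sqrt{n}/12}$) followed by a union bound over the $\sqrt{n}$ colors. Your added remark that the union bound needs no independence across classes is a nice clarification but does not change the argument.
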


\begin{lemma}
  \label{lem:subhc}
  When event $\mathcal{A}$ happens, Phase 1 succeeds with probability $1 - O(\frac{1}{n})$.
\begin{proof}
  By Lemma \ref{lem:partsize}, each partition has size of $a\sqrt{n}$, where $\frac{1}{2} \leq a
  \leq \frac{3}{2}$. Consider a partition with $n'$ vertices as a random graph with probability $p'$.
  It is easy to show that $p' \geq 86 \ln n'/n'$, as follows.
  The probability for the presence of an edge
  in this partition is the same as in the original graph. We have:
  \begin{align*}
    p' &= p \geq 86\frac{\ln n}{\sqrt{n}} = 86 \frac{\ln \frac{{n'}^2}{a^2}}{\frac{n'}{a}}
    = 86 a \frac{2 \ln n' - \ln a^2 }{n'}.
  \end{align*}
  When $1/2 \leq a <1$, then $p' \geq 86 a \frac{2 \ln n'}{n'} \geq 86 \frac{\ln n'}{n'}$.\\
  When $1 \leq a \leq 3/2$, then $p' \geq 86 a \frac{2 \ln n'}{2an'} = 86
  \frac{\ln n'}{n'}$, using the fact that $x - y > \frac{x}{2z}$, for $x$ sufficiently large and
  small constants $y,z$ such that $z>1$.

  Applying theorem \ref{the:seq}, the probability of failing for this partition is
  $O(\frac{1}{(\sqrt{n})^3})$.
  Using union bound, the probability of failure in phase 1 is at most:
  $\sqrt{n} \times O(\frac{1}{(\sqrt{n})^3}) = O(\frac{1}{n})$.
\end{proof}
\end{lemma}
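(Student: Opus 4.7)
The plan is to reduce the claim to a direct application of Theorem~\ref{the:seq} (the DRA guarantee) together with a union bound over the $\sqrt{n}$ color classes produced in Phase~1. Conditioned on event $\mathcal{A}$, each color class has size $n' = a\sqrt{n}$ for some $a \in [1/2, 3/2]$. A key observation is that the color of each vertex is chosen independently of the random edges of $G(n,p)$, so the subgraph induced on any fixed color class is itself distributed as $G(n', p)$ with the \emph{same} edge probability $p = c\ln n/\sqrt{n}$. Thus, to invoke Theorem~\ref{the:seq} on each partition, I only need to verify that the density threshold of that theorem is satisfied, i.e., $p \geq 86 \ln n'/n'$.

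The main task is therefore an inequality check. Substituting $\sqrt{n} = n'/a$ into $p = c\ln n/\sqrt{n}$, I would rewrite the right-hand side as $2ca(\ln n' - \ln a)/n'$ and then verify that this is at least $86\ln n'/n'$ in each of the two regimes $a \in [1/2,1)$ and $a \in [1, 3/2]$. In the former, the factor $2ca$ is at least $c$, and the $-\ln a$ term is positive, so the bound follows immediately for $c \geq 86$. In the latter, $-\ln a$ is negative but bounded by a constant, and I would argue (as in the preceding lemma) that for $n'$ sufficiently large the constant loss is absorbed because $\ln n' - \ln a \geq \ln n'/(2a)$, again giving at least $86\ln n'/n'$. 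The bookkeeping here is the only slightly delicate step, but is purely elementary.

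Once the density condition is verified, Theorem~\ref{the:seq} implies that DRA fails on any single partition of size $n' = \Theta(\sqrt{n})$ with probability $O(1/(n')^3) = O(1/n^{3/2})$. Since the $\sqrt{n}$ subgraphs are edge-disjoint and vertex-disjoint, the DRA executions are independent and can be analyzed separately; a union bound over the $\sqrt{n}$ color classes yields a total failure probability of $\sqrt{n}\cdot O(1/n^{3/2}) = O(1/n)$, which is the conclusion of the lemma.

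The main (minor) obstacle is the constant-chasing in the density inequality, where I need to make sure the constant $c = 86$ still gives $p \geq 86\ln n'/n'$ after the change of variables, even in the worst case $a = 3/2$. Everything else is structural: conditional independence of edges given the coloring, and a routine union bound.
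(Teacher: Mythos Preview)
Your proposal is correct and follows essentially the same route as the paper's proof: you identify that the induced partition is a $G(n',p)$ graph (because the random coloring is independent of the edges), verify the density hypothesis $p \geq 86\ln n'/n'$ via the same substitution $\sqrt{n}=n'/a$ and the same two-case split on $a$, invoke Theorem~\ref{the:seq} to get an $O(1/n^{3/2})$ failure per partition, and finish with the union bound over $\sqrt{n}$ partitions. Your remark that the executions are independent is a slight strengthening, but only the union bound is needed, exactly as in the paper.
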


\begin{proof}[Proof of Lemma \ref{lem:ph1}]
  By Lemma \ref{lem:subhc} and Lemma \ref{lem:partsize}, the probability of failure for phase $1$ of is
  $O(\frac{1}{n})$.

  For the runtime of this phase, we apply Theorem \ref{the:seq}. Consider a partition of size $a\sqrt{n}$,
  the runtime is: $7a\sqrt{n}\ln(a\sqrt{n})$. Each partition executes Algorithm \ref{alg:rdhc} in
  parallel, the runtime is dominated by the largest partition. Sine $a \leq \frac{3}{2}$, the
  runtime of phase 1 is: $O(\sqrt{n}\ln{n})$.
\end{proof}

\textbf{Analysis of Phase $2$:} In this phase, we apply the DRA algorithm on the $G'$ graph of
hypernodes. We only need to show that $G'$ is dense enough to apply Theorem \ref{the:seq}. We have
the following lemma.

\begin{lemma}
  \label{lem:ph2}
  For a $G(n,p)$ with $p\geq \frac{c \ln n}{\sqrt{n}}$ where $c\geq 86$, Phase 2 of the DHC1
  algorithm succeeds with probability $O\left(1 - \frac{1}{n^{\frac{3}{2}}} \right)$, in
  $O(\sqrt{n}\ln{n})$ steps.
\begin{proof}
  The graph $G'$ constructed according to the algorithm is a random graph with $n'=\sqrt{n}$
  and the edge probability $p'$. Consider a pair $(e_i=[v_i, u_i], e_j=[v_j,u_j])$ of
hypernodes, by construction, the probability to have an edge between them is: $p' = 1 - (1-p)^2 \geq
p$, where $p$ is the probability for an edge between two nodes in the orginial $G$ graph.
  \[
    p' \geq p \geq 86 \frac{\ln{n}}{\sqrt{n}} > 86 \frac{\ln n'}{n'}.
  \]
  Applying Theorem \ref{the:seq} this phase succeeds with probability $O\left(1 - \frac{1}{n^{\frac{3}{2}}}
  \right)$ in $O(\sqrt{n}\ln n)$ steps.
\end{proof}
\end{lemma}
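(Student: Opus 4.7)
The plan is to reduce the analysis of Phase~2 to a direct application of Theorem~\ref{the:seq} on the hypernode graph $G'$. To do this, I need to establish two things: (i) $G'$ is a random graph on $n' = \sqrt{n}$ vertices with an effective edge probability $p'$ that meets the density threshold of Theorem~\ref{the:seq}, and (ii) the randomness of $G'$ is independent of whatever happened in Phase~1, so we can legitimately invoke the theorem as if $G'$ were a fresh $G(n', p')$.

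First I would argue independence. Phase~1 partitions the vertex set by random colouring and constructs $C_i$ using only edges internal to the colour class $V_i$. The edges of $G'$ are of the form $(v_j, u_k)$ with $j \neq k$, i.e., strictly between different colour classes. Since the edges of $G(n,p)$ are mutually independent, the inter-class edges are untouched and unconditioned by Phase~1, so conditioning on the outcome of Phase~1 (including the choice of $C_i$ and the selection of the ports $u_i$ and $v_i = \text{predecessor}(u_i)$) does not bias these edges. Hence, given Phase~1 succeeds and event $\mathcal{A}$ holds, the hypernode graph $G'$ is distributed as a random graph on $n' = \sqrt{n}$ vertices whose edges are determined by the inter-class edges of $G$.

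Next I would compute $p'$. For any two distinct hypernodes $e_i = [v_i, u_i]$ and $e_j = [v_j, u_j]$, an undirected edge in $G'$ is present as long as either $(v_i, u_j)$ or $(v_j, u_i)$ is an edge of $G$; these two events are independent Bernoulli$(p)$ trials, so $p' = 1 - (1-p)^2 \geq p \geq 86\, \frac{\ln n}{\sqrt{n}}$. Since $n' = \sqrt{n}$ (so $\ln n' = \tfrac{1}{2}\ln n$), this gives $p' \geq 86\,\frac{\ln n}{n'} \geq 86\,\frac{\ln n'}{n'}$, comfortably clearing the threshold required by Theorem~\ref{the:seq}.

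With $G'$ identified as $G(n', p')$ in the regime where Theorem~\ref{the:seq} applies, I invoke the theorem on $G'$: DRA on $n' = \sqrt{n}$ vertices succeeds in $7 n' \ln n' = O(\sqrt{n}\,\ln n)$ steps with failure probability $O(1/n'^3) = O(1/n^{3/2})$. Combining with Phase~1's guarantees via a union bound yields the claimed bound. The only subtle point that I would expect to be the main obstacle is the independence argument in step~(i): one must be careful that the adversarial choice of ports $u_i, v_i$ (chosen \emph{after} Phase~1 observes the intra-class edges) does not induce dependence on the inter-class edges. This is resolved by noting that port selection uses only information from $C_i$, which is a function of intra-class edges alone, so the conditional distribution of the inter-class edges given Phase~1 remains product Bernoulli$(p)$.
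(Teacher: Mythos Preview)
Your proposal is correct and follows essentially the same route as the paper: identify $G'$ as a random graph on $n'=\sqrt{n}$ hypernodes, compute $p'=1-(1-p)^2\ge p\ge 86\ln n'/n'$, and invoke Theorem~\ref{the:seq} to obtain the $O(\sqrt{n}\ln n)$ step count and $1-O(n^{-3/2})$ success probability. The only difference is that you make explicit the independence argument (Phase~1 consumes only intra-class edges, so the inter-class edges defining $G'$ remain fresh Bernoulli$(p)$ even after the ports $u_i,v_i$ are chosen), which the paper simply takes for granted.
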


\begin{proof}[Proof of Theorem \ref{the:pdhc}]
  The proof of the main theorem then follows trivially, by Lemma \ref{lem:ph1} and Lemma \ref{lem:ph2}.
  The probability of success is:
  \[
    O\left( 1 - \frac{1}{n} \right) O \left(1 - \frac{1}{n^{3/2}}\right) =
    O\left( 1 - \frac{1}{n} \right).
  \]
  The number of steps in each phase is: $O(\sqrt{n}\ln n)$.
  In the worst case, consider we have broadcast in every step, then, the number of rounds is the
  number of steps multiplied by $O(D)$ where $D$ is the diameter of the graph executing the DRA
  algorithm. In both Phase 1 and Phase 2, the graphs are random graphs under the model $G(n',p')$
  where $p'\geq 86 \ln n' / n'$, and $n' = \Theta(\sqrt{n})$. By \cite{chung2001257}, the diameter
  of these graphs is $\Theta(\frac{\ln n'}{\ln \ln n'}) = \Theta(\frac{\ln n}{\ln \ln n})$.

  Therefore, the number of rounds is bounded by:
  \[
    O\left( \sqrt{n} \frac{(\ln n)^2}{\ln \ln n} \right).
  \]
\end{proof}

\subsection{The Algorithm for $p = \frac{c\ln n}{n^\delta}$}
\label{subsec:algp}

\begin{algorithm*}[tbh]
  \caption{Distributed Hamiltonian Cycle Algorithm 2 (DHC2). Code for $v \in G(V,E)$.}
  \label{alg:hhc}
  \begin{algorithmic}[1]
    \Phase \ $1$
    \State Run phase 1 of algorithm \ref{alg:pdhc}, using $n^{1-\delta}$ colors
    \EndPhase
    \Phase \ $2$
      \For {$i = 1 \cdots \lceil \log n^{1-\delta} \rceil $}
        \If {$v.color$ is $odd$} \Comment{$v$ is an active node}
          \State send message $\mathit{verify}(succ(v))$ to all its neighbors with color $v.color+1$
          \OnReceive \ $\cup\{ \mathit{verified}(u,u') \}$
            \State Select the smallest $(u,u')$, construct candidate bridge: $candidate \gets ((v, u'),(u,succ(v)))$
            \State Broadcast $candidate$ within $v$'s partition
            \If {$candidate = min(\cup candidates)$}
              \State Send message $buildBridge$ to $u$
              \State Broadcast $Renumbring$ inside HC
            \EndIf
          \EndReceive
        \EndIf
        \OnReceive \ message $\mathit{verify}(u)$ \Comment{only passive nodes receive this type message}
            \State ask $succ(v)$ and $pred(v)$ if they have $u$ as their $(v.color - 1)$ neighbor
            \State if $succ(v)$ (or $pred(v))$ confirmed, set $u'$ to $succ(v)$ (or $pred(v)$) , reply to sender: $\mathit{verified}(v, u')$
          \EndReceive
          \OnReceive \ message $buildBridge$
          \State Broadcast $Renumbring$  HC
          \EndReceive
        \State $v.color \gets \lceil v.color/2 \rceil$
      \EndFor
    \EndPhase
  \end{algorithmic}
\end{algorithm*}

We proved that for a $G(n,p)$ with $p=\frac{c \ln n}{\sqrt{n}}$, the DHC1 algorithm \ref{alg:pdhc} finds
a Hamiltonian cycle in $\tilde{O}(\sqrt{n})$ times. It is natural to ask the question: what is the performance on sparser graphs? Consider a $G(n,p)$ random graph where
$p=O(\frac{c \ln n}{n^\delta})$, for any $\delta \in (0,1)$. If we divide the graph into $n^{1-\delta}$ partitions,
each of size $n^\delta$, then Phase 1 of the DHC1 algorithm will work.
However, Phase 2 will not, since the graph of hypernodes is too sparse, under the threshold
required for the presence of an Hamiltonian cycle in Phase 2.

We present a general algorithm \ref{alg:hhc} called DHC2 that finds a Hamiltonian cycle
in random graphs $G(n,p)$ where $p=O(\frac{c\ln n}{n^\delta})$, where $c$ is a suitably large constant. This algorithm also has two phases. Phase $1$ is essentially a generalization of Phase 1 of DHC1, with $n^{1-\delta}$ partitions.
In phase $2$ of DHC2, we recursively merge  pairs of two disjoint cycles (in parallel) until the final cycle is
formed. Figure \ref{fig:hhc} depicts these merging steps. It follows that the algorithm constructs the final Hamiltonian cycle if it always successes in merging. We will show that this probability is very
high. But let's first describe the merging procedure.

\begin{figure}[t]
  \label{fig:dhc2}
  \centering
  \includegraphics[width=\linewidth]{./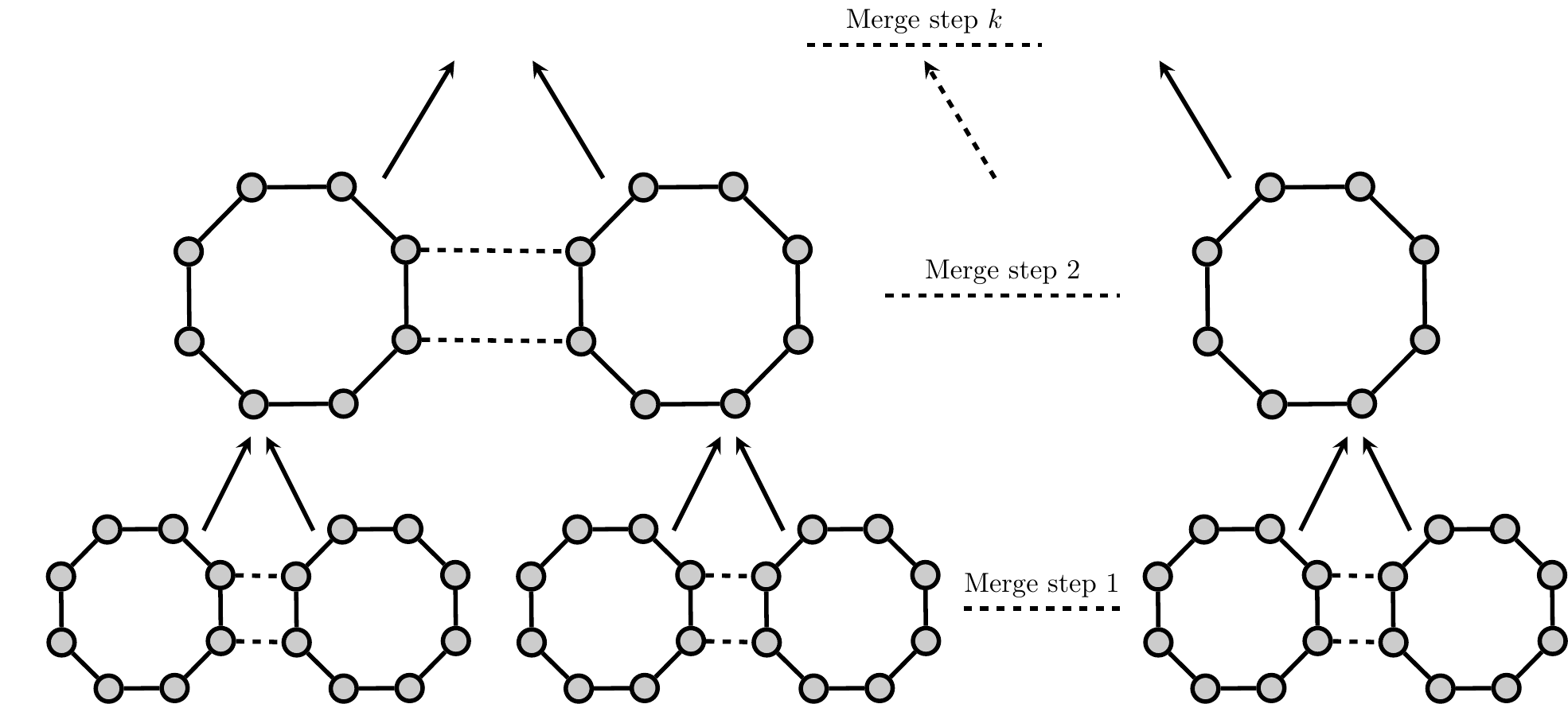}
  \caption{Phase 2 of the DHC2 algorithm: Merging pairs of cycles in a tree-like fashion. There are $O(\log n)$ merge steps, in each step, all HC pairs merge in parallel. The figure also shows how a pair of cycles are merged into a larger cycle by choosing two bridge edges.}
  \label{fig:hhc}
\end{figure}


To merge the cycles, we define a rule for pairing them, then describe the merging by finding
a ``bridge'' between a pair of two cycles, as explained below.  Let's have the cycles indexed by
colors: $HC_1,HC_2,...,HC_{n^{1-\delta}}$. The pairing rule is to match two consecutive cycles, from left to right: $(HC_1,HC_2),\cdots,(HC_{2k+1},HC_{2k+2}),\cdots$, at most one cycle will be left out. Each pair merges independently in parallel, then every node (thus every cycle, including the left out one), updates their respective colors: $color \gets \lceil color/2 \rceil$. Therefore, the next merge step can progress with the same pairing rule, and every cycle is aware of its pair in all steps. It is clear that we need $\lceil \log(n^{1-\delta}) \rceil = O(\log(n))$ merge steps. To merge two cycles, we need to pick one ``bridge'' between them.
Let $e_i=(v_i,u_i) \in HC_i$ and $e_j=(v_j,u_j) \in HC_j$ where $(HC_i, HC_j)$ is a pair.
If there are two edges $(v_i, v_j)$ and $(u_i,u_j)$ or two edges $(v_i, u_j)$
and $(u_i,v_j)$ in $G(n,p)$, then we say $(e_i,e_j)$ is a bridge of $(HC_i, HC_j)$. The idea is, that each node
can check if it is part of a bridge, in parallel. Then within $HC_i$ and $HC_j$, each node broadcasts the discovered
bridge. This is done so as to choose one unique bridge per pair (since there may be more than one bridge per pair).  Each cycle choses the smallest bridge (say, based on the IDs of the bridge nodes).
Once a bridge is chosen, for example,  merging is done by each node independently
updating its $cycindex$, and updating $color$ (as mentioned above) for the next merging step. For
efficiency, in a pair, only the cycle with smaller $color$ will initiate the process, as shown in algorithm \ref{alg:hhc}.

Also, to avoid cluttering the algorithm \ref{alg:hhc}, we did not specify the renumbering process. This is trivial, given the bridge, and the size of the two cycles. Initially, each cycle performs a broadcast, so that its member nodes get to know the cycle size. Then, this information is attached to the bridge building message. From that onwards, every node can keep track of the size of the cycle that it is part of until the merging process is finished.

\begin{lemma}
  \label{lem:mhpdh2}
  Phase 1 of the DHC2 algorithm succeeds in $O(n^\delta \ln n)$ steps, with probability at least $1 - O(\frac{1}{n})$.
\begin{proof}
  Similar to Lemma \ref{lem:partsize}, it is easy to see that all partitions have size concentrated
  around the expected size, which is $\Theta(n^\delta)$. Consider a single color, let $X$ be the random variable
  of the size of the corresponding partition. Let $X_i$ be indicator random variables: $X_i=1$ if node $i$ choses
  that color, $0$ other wise. By linearity of expectation we have $E[X]=\frac{n}{n^{1-\delta}}=n^\delta$. Chernoff's bound gives:
  \[ Pr(|X-n^\delta)| \geq \frac{1}{2} n^\delta) \leq 2e^{-n^\delta / 12}. \]
  By union bound, all $n^{1-\delta}$ partitions have sizes in $\Theta(n^\delta)$, with probability:
  \[ O\left( n^{1-\delta} 2e^{-n^\delta / 12} \right) = O\left( \frac{1}{n} \right). \]
  Consider a partition with size: $n' = \Theta(n^\delta)$, as a random graph with edge
  probability $p'$. We have:
  \begin{align*}
    p' = p = O\left( \frac{\ln n}{n^\delta}\right) =
      O \left( \frac{1}{\delta} \frac{\ln n'}{n'} \right) = O \left( \frac{\ln n'}{n'} \right).
  \end{align*}
  By Theorem \ref{the:seq}, note  that we can reduce the probability of failure to
  $O\left( \frac{1}{n^{2-\delta}} \right)$ by increasing the number of steps by some factor of
  $(2-\delta)$. Thus, the number of required steps is: $O((2-\delta)n' \ln n') =
  O(n^\delta \ln n)$.

  Using union bound for $n^{1-\delta}$ partitions,
  the probability of failure is bounded above by:
  \[ n^{1-\delta} \times O\left( \frac{1}{n^{2-\delta}} \right) = O\left( \frac{1}{n} \right). \]
\end{proof}
\end{lemma}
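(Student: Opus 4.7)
The plan is to decouple Phase~1 into two independent parts: (i) the random color partition is balanced, i.e., every color class has size $\Theta(n^{\delta})$ with high probability, and (ii) conditioned on that balanced event, the DRA subroutine succeeds quickly on each induced subgraph, with per-partition failure probability strong enough that a union bound over all $n^{1-\delta}$ partitions still yields $O(1/n)$. Combining these two ingredients (with one final union bound) delivers the lemma.

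For (i), I will follow the template of Lemma~\ref{lem:partsize}. Fix any color and let $X = \sum_{i=1}^{n} X_i$ where $X_i$ is the indicator that node $i$ picks that color; by linearity $E[X] = n \cdot n^{-(1-\delta)} = n^{\delta}$. A standard Chernoff bound gives $\Pr\!\left(|X - n^{\delta}| \geq \tfrac{1}{2} n^{\delta}\right) \leq 2 e^{-n^{\delta}/12}$, which is super-polynomially small because $\delta > 0$ is a fixed constant. A union bound over the $n^{1-\delta}$ colors therefore establishes the ``balanced'' event (every color class has size in $[\tfrac{1}{2}n^{\delta}, \tfrac{3}{2}n^{\delta}]$) with probability $1 - O(1/n)$.

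For (ii), I will condition on the balanced event and view each color class as an induced random subgraph on $n' = \Theta(n^{\delta})$ nodes with edge probability $p = c \ln n / n^{\delta}$. Since $n' = \Theta(n^{\delta})$ we have $p = \Theta(\ln n' / n')$, so for a suitably large constant $c$ the subgraph density comfortably exceeds the $86 \ln n'/n'$ threshold demanded by Theorem~\ref{the:seq}, and the DRA algorithm applies inside every partition in parallel.

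The main obstacle is that the raw $O(1/(n')^3) = O(1/n^{3\delta})$ failure probability of Theorem~\ref{the:seq} is not strong enough: a union bound over $n^{1-\delta}$ partitions only yields $O(1/n^{4\delta - 1})$, which fails to be $O(1/n)$ once $\delta < 1/2$. To remedy this, I will revisit the coupon-collector argument in the proof of Theorem~\ref{the:seq}: inflating the step budget by any constant factor $\alpha > 1$ scales the exponents inside each of the three failure events $\mathcal{E}_1$, $\mathcal{E}_{2.1}$, $\mathcal{E}_{2.2}$ linearly, yielding per-partition failure probability $O(1/(n')^{\alpha})$. Choosing $\alpha = (2-\delta)/\delta$, which is a constant since $\delta$ is fixed, gives per-partition failure $O(1/n^{2-\delta})$ at a cost of $\alpha \cdot 7 n' \ln n' = O(n^{\delta} \ln n)$ steps. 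A union bound over the $n^{1-\delta}$ partitions then yields a total Phase~1 failure of $O(n^{1-\delta} \cdot n^{-(2-\delta)}) = O(1/n)$ from the DRA runs; combining this with the $O(1/n)$ failure probability of the balanced event via one last union bound proves the lemma.
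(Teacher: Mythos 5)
Your proposal is correct and follows essentially the same route as the paper's proof: a Chernoff-plus-union-bound argument for partition balance, a density check showing each partition meets the threshold of Theorem~\ref{the:seq}, and an amplification of that theorem's failure probability to $O(1/n^{2-\delta})$ per partition (at the cost of a constant-factor increase in steps) so that the union bound over $n^{1-\delta}$ partitions gives $O(1/n)$. If anything, you are more explicit than the paper about why the raw $O(1/(n')^3)$ bound is insufficient for $\delta < 1/2$ and about the precise boosting factor $(2-\delta)/\delta$.
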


To prove that Phase 2 of the DHC2 algorithm succeeds, we will first show the probability of
success for merge step $1$.


\begin{lemma}
\label{lem:mhpdh3}
The merging of $\frac{np}{\ln n} = n^{1-\delta}$ Hamiltonian cycles in the first merging step of Phase 2
will be successful, with very high probability.

\begin{proof}
  Consider two partitions with two cycles $C, C'$, each with expected size $n^\delta$. Fix an
  edge $e$ in $C$, the probability that $e$ has a bridge to a fixed edge in $C'$ is at least $p^2$. Consider the set $S'$ of all non-adjacent edges in $C'$, such that $|S'|$ is maximal, The
  probability that $e$ does not have any bridge to $C'$ is at most the probability that $e$ does not have any bridge to $S'$:
  \begin{align*}
    (1 - p^2)^{n^\delta /2} &= O\left( \left( 1 -
    \frac{(\ln n)^2}{(n^\delta)^2}\right)^{n^\delta /2} \right) \\
    &= O \left( \left(e^{-(\ln n)^2} \right)^{1/(2\sqrt{n^\delta})} \right) \\
    &= O \left( n^{-\frac{\ln n}{2n^{\delta/2}}} \right).
  \end{align*}
  Consider the set $S$ of all non-adjacent edges in $C$, such that $|S|$ is maximal, the probability that all edges in $S$ has no bridge to $C'$ is:
  \begin{align*}
    O\left( \left( n^{-\frac{\ln n}{2n^{\delta/2}}} \right)^{n^\delta /2} \right) =
    O\left( n^{-n^{\delta/2}\ln n} \right).
  \end{align*}
  The above is the bound for the probability that $C$ and $C'$ fail to merge. We have $n^{1-\delta}/2$ pairs to merge,
  thus, union bound gives the failure probability:
  \begin{align*}
    \frac{n^{1-\delta}}{2} \times O\left( n^{-n^{\delta/2}\ln n} \right) =
    O\left( n^{-n^{\delta/2}\ln n + 1 - \delta} \right).
  \end{align*}
\end{proof}
\end{lemma}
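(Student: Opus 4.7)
The plan is to pair up the $n^{1-\delta}$ Phase-1 cycles as the algorithm prescribes, fix an arbitrary pair $(C,C')$ of such cycles, and show that with super-polynomially high probability at least one bridge between them exists; a union bound over the $n^{1-\delta}/2$ pairs then finishes the lemma. Throughout I would condition on the event that Phase 1 has succeeded and that every partition has size $\Theta(n^{\delta})$ (Lemma \ref{lem:mhpdh2}), so that $C$ and $C'$ have sizes $\Theta(n^{\delta})$. The edges of $G(n,p)$ running between the two partitions are untouched by Phase 1 and therefore remain independent $\mathrm{Bernoulli}(p)$ variables under this conditioning, which is what makes the Phase-2 analysis clean.

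Inside each cycle I would next extract a \emph{matching} of cycle-edges by taking every other edge of the cycle: sets $S \subseteq E(C)$ and $S' \subseteq E(C')$ of sizes $\Theta(n^{\delta})$ whose edges have pairwise disjoint endpoints within each cycle. For any fixed $(e,e') \in S \times S'$, the event ``$(e,e')$ is a bridge'' is the disjunction of the two ``parallel-edge'' events, each requiring a specific pair of cross edges of $G(n,p)$ to be present, so it has probability at least $p^{2}$. The essential point is that because $S$ and $S'$ are matchings, the four endpoints appearing in any $(e,e')$ pair are disjoint from the four endpoints of any other pair $(f,f') \in S \times S'$; the cross edges witnessing distinct candidate bridges therefore involve \emph{disjoint} pairs of vertices, and the corresponding bridge indicators are mutually independent.

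By this independence, the probability that no bridge exists between $C$ and $C'$ is at most
\[
(1-p^{2})^{|S|\cdot|S'|} \;\le\; \exp\!\bigl(-p^{2}\, |S|\, |S'|\bigr).
\]
Plugging in $p = c\ln n / n^{\delta}$ and $|S|\cdot|S'| = \Theta(n^{2\delta})$ gives a bound of the form $\exp(-\Omega(c^{2}\ln^{2} n)) = n^{-\omega(1)}$. A union bound over the at most $n^{1-\delta}/2 \le n$ cycle pairs preserves the super-polynomially small failure probability, yielding the stated ``very high probability'' success for the first merge step.

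The main subtlety I anticipate is making the independence claim airtight: one must verify that passing to matchings $S$ and $S'$ really does decouple \emph{all} $|S|\cdot|S'|$ bridge indicators (so that the exponent is $|S|\cdot|S'|$ rather than just $|S'|$), and that the randomness exploited in Phase 2 is genuinely independent of the Phase-1 execution used to produce $C$ and $C'$. Both points are handled by the matching construction together with the elementary fact that in $G(n,p)$ edges with disjoint endpoint-pairs are independent, so inter-partition edges are independent of intra-partition edges conditioned on Phase-1 success.
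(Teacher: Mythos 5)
Your proposal is correct and follows essentially the same route as the paper: take maximal sets $S,S'$ of pairwise non-adjacent cycle edges in $C$ and $C'$, use the per-pair bridge probability $p^2$ and the independence of the disjoint witnessing cross-edges to bound the no-bridge probability by $(1-p^2)^{|S||S'|}$, and union-bound over the $n^{1-\delta}/2$ pairs. If anything your computation is the more careful one --- the correct bound is $\exp(-\Theta(c^2\ln^2 n)) = n^{-\Theta(\ln n)}$, whereas the paper's displayed $n^{-n^{\delta/2}\ln n}$ contains an arithmetic slip (both are superpolynomially small, so the lemma is unaffected) --- and your explicit conditioning on Phase-1 success and the independence of inter-partition edges is a point the paper leaves implicit.
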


\begin{lemma}
\label{lem:mhpdh4}
Phase 2 of the HHC algorithm is successful with very high probability which is $1 - o(1/n)$.
\begin{proof}
  Observe that after merging, the size of the Hamiltonian cycles increase, thus, in successive merge steps, the probability of failure becomes smaller than that in the first step.
  Using Lemma \ref{lem:mhpdh3}, with $O(\log{n})$ merge steps, union bound of the failure of Phase 2 is:
  \begin{align*}
    O\left( \ln{n} \cdot n^{-n^{\delta/2}\ln n + 1 - \delta} \right) = o\left(\frac{1}{n}\right).
  \end{align*}
\end{proof}
\end{lemma}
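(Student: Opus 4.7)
The plan is to bootstrap from Lemma \ref{lem:mhpdh3}, which already handles the hardest merge step, and then argue monotonicity across the remaining steps so that a single union bound over $O(\log n)$ merge rounds gives the claimed probability.

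First, I would observe that after the $k$-th merge step each surviving cycle has (essentially doubled in) size roughly $2^k n^{\delta}$, while the edge probability $p = c\ln n / n^{\delta}$ has not changed. Re-running the counting argument in the proof of Lemma \ref{lem:mhpdh3} for two cycles of size $s \geq n^{\delta}$, the probability that a fixed edge in one cycle has no bridge to any of the $\Theta(s)$ non-adjacent edges in the other is $(1-p^2)^{\Theta(s)}$, and the probability that none of $\Theta(s)$ non-adjacent edges in the first cycle has a bridge is $(1-p^2)^{\Theta(s^2)}$. Since this bound is decreasing in $s$, the failure probability at step $k$ is dominated by the bound obtained at step $1$, namely $O(n^{-n^{\delta/2} \ln n})$ per pair.

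Next I would union-bound over pairs within a step and then over steps. At step $k$ there are at most $n^{1-\delta}/2^k$ pairs, so the failure probability at step $k$ is at most
\[
\frac{n^{1-\delta}}{2^k} \cdot O\!\left( n^{-n^{\delta/2} \ln n} \right)
= O\!\left( n^{-n^{\delta/2} \ln n + 1 - \delta} \right).
\]
Summing (or union-bounding) over the $\lceil \log n^{1-\delta}\rceil = O(\log n)$ merge steps gives a total failure probability
\[
O\!\left( \log n \cdot n^{-n^{\delta/2} \ln n + 1 - \delta} \right) = o(1/n),
\]
which is the required bound. I would also briefly note that conditioning on success of Phase 1 (from Lemma \ref{lem:mhpdh2}) and of all previous merge steps does not affect these bounds except by an $O(1/n)$ additive term, so the high-probability statement persists.

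The main obstacle, and the only place one has to be careful, is justifying that the per-pair failure bound does not degrade across steps. One must check that the edges available for the bridge search at step $k>1$ are still governed by the original $G(n,p)$ randomness and have not been ``used up'' by Phase 1 or earlier merges: the edges inside each sub-cycle belong to the partition randomness of Phase 1, whereas a bridge between two cycles at step $k$ requires cross-partition edges that were not examined in earlier steps (in step $k$ one looks at edges between partitions whose colors have just been paired at this level of the merge tree). This independence, together with the monotonicity of the bound in cycle size, is what lets the first-step estimate dominate and makes the union bound valid.
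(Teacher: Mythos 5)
Your proposal is correct and follows essentially the same route as the paper's proof: the per-pair failure bound from Lemma \ref{lem:mhpdh3} only improves as cycle sizes grow across merge steps, so a union bound over the $O(\log n)$ steps yields $O\left(\log n \cdot n^{-n^{\delta/2}\ln n + 1 - \delta}\right) = o(1/n)$. Your added remark that the cross-partition edges examined at each merge level are fresh (untouched by Phase 1 or earlier merges) makes explicit an independence assumption the paper leaves implicit, which is a worthwhile refinement rather than a departure.
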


\begin{theorem}
  The DHC2 algorithm succeeds with probability $1 - O(\frac{1}{n})$ in $\tilde O(n^\delta)$
  steps.
\begin{proof}
  By Lemma \ref{lem:mhpdh2} and Lemma \ref{lem:mhpdh4}, the probability that the DHC2 algorithm
  succeeds is:
  \begin{align*}
    \left(1 - O\left(\frac{1}{n}\right)\right)\left(1 - o\left(\frac{1}{n}\right)\right)  =
    1 - O\left(\frac{1}{n}\right).
  \end{align*}
  To find the time complexity, we proceed similarly to the analysis of DHC1 algorithm: first
  calculate the number of steps, then consider the number of rounds required for broadcast.

  In Phase 1,
  the size of a subgraph is $n' = \Theta(n^\delta)$, and the edge probability is $p'=
  O(\ln n'/n')$, and by \cite{chung2001257}, the diameter is $O(\frac{\ln n'}{\ln \ln n'}) =
  O(\frac{\ln n}{\ln \ln n})$.

  In Phase 2, each merging takes constant number of rounds, and the broadcast time depends on the diameter of a subgraph. Observe that after each merging, we have a larger subgraph, while the edge probability is fixed, thus relative to the size, this larger subgraph is denser.
  Therefore we can bound the diameter of the merged subgraphs by the diameter of subgraphs in the first level, which is $O(\frac{\ln n}{\ln \ln n})$.

  The number of rounds for our DHC2 algorithm is then:
  \begin{align*}
    & O\left(n^\delta\ln n \frac{\ln n}{\ln \ln n} \right) +
    O\left(\ln n \frac{\ln n}{\ln \ln n} \right) \\
    &= O\left(\frac{n^\delta(\ln n)^2}{\ln \ln n} \right).
  \end{align*}

\end{proof}
\end{theorem}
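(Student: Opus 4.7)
The plan is to derive the theorem directly from the two lemmas established earlier in the section---Lemma~\ref{lem:mhpdh2} for Phase~1 and Lemma~\ref{lem:mhpdh4} for Phase~2---while carefully converting the step counts they give into round counts in the CONGEST model. For correctness, since DHC2 succeeds only when both phases succeed, I would apply a union bound on the failure events: Phase~1 fails with probability $O(1/n)$, Phase~2 fails with probability $o(1/n)$, so the overall success probability is at least $1 - O(1/n)$, as claimed.

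For the step count, Phase~1 contributes $O(n^\delta \ln n)$ steps by Lemma~\ref{lem:mhpdh2}. For Phase~2, I would argue that there are $\lceil \log n^{1-\delta}\rceil = O(\log n)$ merging rounds, and in each round the work per pair of cycles consists only of a constant number of local actions (candidate-bridge discovery by the active endpoints, selecting the minimum candidate inside each cycle, sending the $\mathit{buildBridge}$ message, and renumbering). Thus Phase~2 contributes only $O(\log n)$ steps and is dominated by Phase~1.

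The third step is to convert steps into rounds by multiplying by the cost of each broadcast used inside a cycle. For Phase~1, each partition is a random graph on $n' = \Theta(n^\delta)$ vertices with edge probability $p' \geq c\ln n'/n'$, which by the diameter bound of \cite{chung2001257} has diameter $O(\log n'/\log\log n') = O(\log n/\log\log n)$. For Phase~2, I would observe that after each merge the underlying random subgraph has $\Theta(n^\delta \cdot 2^i)$ vertices but the same edge probability $p$, so it is at least as dense (relatively) as a Phase~1 partition and hence has diameter at most $O(\log n/\log\log n)$ as well. Combining these bounds yields a total round count of $O(n^\delta \ln n \cdot \log n/\log\log n) + O(\log n \cdot \log n/\log\log n) = O(n^\delta (\log n)^2/\log\log n) = \tilde O(n^\delta)$.

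The step I expect to require the most care is the diameter argument for the merged subgraphs in Phase~2: one must justify that a broadcast for the renumbering step travels over the full underlying random subgraph induced on the merged color classes (not just along the Hamiltonian cycle itself), and that this induced subgraph remains dense enough to apply the Chung--Lu diameter bound throughout all $O(\log n)$ merging levels. Once this is in place, the two lemmas assemble cleanly into the final bounds on success probability and round complexity.
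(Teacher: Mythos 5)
Your proposal is correct and follows essentially the same route as the paper's proof: combine Lemma~\ref{lem:mhpdh2} and Lemma~\ref{lem:mhpdh4} for the success probability, count $O(n^\delta \ln n)$ steps in Phase~1 plus $O(\log n)$ merge steps in Phase~2, and multiply by the $O(\ln n/\ln\ln n)$ diameter bound of \cite{chung2001257}, using the observation that merged subgraphs keep the same edge probability and hence stay relatively dense. The only (harmless) difference is that you use a union bound on the failure events where the paper multiplies the success probabilities.
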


\section{The Upcast Algorithm: A Centralized Approach}
In this section we consider what perhaps is the simplest and most obvious strategy of all --- we collect ``sufficiently large'' number of edges at some pre-designated root and then leave it to the root to compute a
Hamiltonian cycle.

\subsection{The Upcast Algorithm}\label{sec:upcastalg}
\begin{enumerate}
  \item Elect a leader, call it $v$. This step takes $O(D)$ rounds.

  \item Construct a BFS tree rooted at $v$, and call it $\mathcal{B}$. This step takes $O(D)$ rounds.

  \item All nodes except $v$ sample some $c'\log{n}$
\footnote{all the logarithms are natural logarithms}
of their adjacent edges (for a sufficiently large constant $c'$) --- independently and randomly --- and send the sampled edges to $v$ via the BFS tree constructed in the previous step.

  \item The root $v$ computes a Hamiltonian cycle locally and downcasts it to the rest of the nodes in $G$. This step takes essentially the same number of rounds as the previous (upcast) step.
\end{enumerate}

The main technical challenge in the analysis is showing that the upcast
can be done in time $\tilde{O}(1/p)$. This is done by showing that in a BFS tree in a random graph, the sizes of the subtrees rooted at every node are balanced (i.e., essentially the same size) whp. This ensures that the congestion
at each node during upcast is balanced and is $\tilde{O}(1/p)$.

\subsection{Analysis for the special case when $p = \Theta(\frac{\log{n}}{\sqrt{n}}).$}
Let $D$ be the diameter of $G = (V, E)$. Then Corollary 7 in \cite{bollobas1981diameter} implies that
\begin{fact}\label{fact-diameter-is-two}
  $D = 2$ when $p = \Theta(\frac{\log{n}}{\sqrt{n}})$.
\end{fact}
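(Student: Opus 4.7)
The plan is to prove $D = 2$ by bounding the diameter both from above and below, with the upper bound being the substantive direction. For the lower bound, I would note that a single edge has probability $p = o(1)$, so by a routine first-moment argument the expected number of non-adjacent pairs is $\binom{n}{2}(1-p) = \Theta(n^2)$, which concentrates, so whp there exists at least one pair of vertices that are not directly connected; hence $D \geq 2$.

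For the upper bound $D \leq 2$, fix a pair of vertices $u, v$. They are at distance at most $2$ whenever either $uv \in E$ or some third vertex $w$ is adjacent to both. For any fixed $w \notin \{u,v\}$, the events $uw \in E$ and $vw \in E$ are independent, each with probability $p$, so
\[
\Pr[w \text{ is not a common neighbor of } u \text{ and } v] = 1 - p^2.
\]
Over the $n-2$ candidate choices of $w$ these events are mutually independent, so
\[
\Pr[d(u,v) > 2] \le (1 - p^2)^{n-2} \le \exp\bigl(-(n-2)p^2\bigr).
\]
With $p = \Theta(\log n / \sqrt{n})$ we have $(n-2)p^2 = \Theta((\log n)^2)$, and the right-hand side is $\exp(-\Theta((\log n)^2))$, which decays faster than any polynomial in $n$.

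The final step is a union bound over the $\binom{n}{2} < n^2$ pairs:
\[
\Pr[D > 2] \le \binom{n}{2}\exp\bigl(-\Theta((\log n)^2)\bigr) = o(1/n^c) \text{ for every constant } c > 0.
\]
Combined with the lower bound, this yields $D = 2$ whp. The only obstacle worth flagging is keeping the constants in $p = \Theta(\log n / \sqrt{n})$ honest: one should verify that even the lower end of the constant in $p$ makes $(n-2)p^2$ grow like $\Theta((\log n)^2)$, which it does since only the constant in front of $(\log n)^2$ changes. Everything else is a routine application of independence and the union bound, consistent with the Bollobás diameter result cited as the source of the fact.
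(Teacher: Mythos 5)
Your proof is correct, but it takes a different route from the paper: the paper does not prove this fact at all, it simply invokes Corollary~7 of Bollob\'as's 1981 paper on the diameter of random graphs, which gives a sharp (two-point) concentration result for $\mathrm{diam}(G(n,p))$ across a wide range of densities. Your argument is the standard self-contained one for this particular regime: a common-neighbor computation showing $\Pr[d(u,v) > 2] \le (1-p^2)^{n-2} = \exp(-\Theta((\log n)^2))$, followed by a union bound over all $\binom{n}{2}$ pairs, plus the (essentially trivial) observation that the graph is not complete so $D \ge 2$. Both the independence claim (the events indexed by distinct $w$ involve disjoint edge slots) and the constant-tracking remark are handled correctly; for the lower bound you could even shortcut the first-moment argument by noting $\Pr[G \text{ complete}] = p^{\binom{n}{2}} \to 0$. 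What the citation buys the paper is generality --- the same reference covers the $p = \Theta(\log n / n^{1-\epsilon})$ case used later via the Klee--Larman result. What your direct computation buys is self-containedness and an explicit, superpolynomially small failure probability $n^{2-\Theta(\log n)}$, which would in principle survive any further union bounds the paper might need. No gaps.
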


Thus Steps $1$ and $2$ take $O(1)$ time in total. We claim that Step $3$ in the algorithm takes $O(\sqrt{n}\log^2{n})$ rounds with high probability.\\

For $i \geq 0$, let $L_i$ be the nodes at level $i$ in the BFS tree $\mathcal{B}$. That is, $L_0 = \left\{v\right\}$, $L_1  =  \left\{w \in V\ |\ (v, w) \in E\right\}$, and $L_2  =
\left\{w \in V\ |\ dist(v, w) = 2\right\}$. We note that $L_0 \cup L_1 \cup L_2 = V$ by dint of Fact \ref{fact-diameter-is-two}.
\begin{lemma}\label{lemma-size-of-L1}
  $c(1 - \delta_1)(1 - \delta_2)\sqrt{n}\log{n}   \leq   |L_1|   \leq   c(1 + \delta_1)\sqrt{n}\log{n}$ with high probability for any fixed constants $\delta_1, \delta_2 \in (0, 1)$.
\end{lemma}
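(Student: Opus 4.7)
The random variable $|L_1|$ is just the degree of the chosen root $v$ in the $G(n,p)$ instance. Since leader election and BFS-tree construction can be carried out by rules (for instance, minimum identifier with ID-based tie-breaking) that do not inspect the edge set, the $n-1$ potential edges incident to $v$ remain mutually independent Bernoulli$(p)$ trials. Thus $|L_1| \sim \mathrm{Bin}(n-1,p)$ with mean $\mu = (n-1)p$, and plugging in $p = c\log n/\sqrt{n}$ gives the envelope
\[
  \Bigl(1-\tfrac{1}{n}\Bigr)\,c\sqrt{n}\log n \;\le\; \mu \;\le\; c\sqrt{n}\log n.
\]

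My plan is to apply the standard multiplicative Chernoff bound once in each direction. For the upper tail, $\Pr\bigl(|L_1| \ge (1+\delta_1)\mu\bigr) \le \exp(-\delta_1^2\mu/3)$; combined with $\mu \le c\sqrt{n}\log n$ this yields $|L_1| \le c(1+\delta_1)\sqrt{n}\log n$ except on an event of probability $\exp(-\Omega(\sqrt{n}\log n))$, which is negligible compared to any $1/n^k$. For the lower tail, $\Pr\bigl(|L_1| \le (1-\delta_1)\mu\bigr) \le \exp(-\delta_1^2\mu/2)$, so with the same quality of concentration we obtain $|L_1| \ge (1-\delta_1)\mu = (1-\delta_1)(n-1)p$.

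The second slack factor $(1-\delta_2)$ in the lemma is then used to absorb the mild discrepancy between $(n-1)p$ and $np$: for any fixed constant $\delta_2 \in (0,1)$ and all sufficiently large $n$, we have $(n-1)/n \ge 1-\delta_2$, hence
\[
  (1-\delta_1)(n-1)p \;\ge\; (1-\delta_1)(1-\delta_2)\,np \;=\; c(1-\delta_1)(1-\delta_2)\sqrt{n}\log n.
\]
A union bound over the two Chernoff events yields both inequalities simultaneously with probability $1-\exp(-\Omega(\sqrt{n}\log n))$, which is stronger than the required ``with high probability'' statement.

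I do not anticipate any real obstacle: the argument is a direct Chernoff calculation on a Binomial. The only conceptual wrinkle is explaining why the lemma carries two slack parameters rather than one --- $\delta_1$ captures the genuine multiplicative deviation produced by Chernoff, while $\delta_2$ is a cosmetic device that converts the awkward mean $(n-1)p$ into the cleaner target $c\sqrt{n}\log n$. Keeping the two knobs separate will presumably also be convenient when this bound is chained with later estimates on $|L_2|$ and on subtree sizes, where the factor coming from Chernoff and the factor coming from ``arithmetic tidy-up'' will have to be propagated together.
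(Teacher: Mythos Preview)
Your proposal is correct and matches the paper's own proof essentially line for line: bound the mean $(n-1)p$ between $(1-\delta_2)c\sqrt{n}\log n$ and $c\sqrt{n}\log n$, then apply the multiplicative Chernoff bound once in each direction. Your explicit remark that the root can be chosen independently of the edge set (so that $|L_1|\sim\mathrm{Bin}(n-1,p)$ genuinely holds) is a point the paper leaves implicit.
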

\begin{proof}
As $p = \frac{c\log{n}}{\sqrt{n}}$, $E[|L_1|] = (n - 1)p = \frac{c(n - 1)\log{n}}{\sqrt{n}} = c\sqrt{n}\log{n} - o(1)   \implies   (1 - \delta_2)c\sqrt{n}\log{n}  \leq  E[|L_1|]  \leq  c\sqrt{n}\log{n}$, for any fixed
constant $\delta_2$ in $(0, 1)$. A simple application of Chernoff bound gives us
\begin{align*}
  &Pr(|L_1|  \geq  c(1 + \delta_1)\sqrt{n}\log{n})\\
  &\leq  \text{exp}(-\frac{\delta_1^2 \cdot c(1 - \delta_2)\sqrt{n}\log{n}}{3})\\
  &=   n^{-\frac{\delta_1^2 \cdot c(1 - \delta_2)\sqrt{n}}{3}}\text{.}
\end{align*}
Similarly,
\begin{align*}
  &Pr(|L_1|  \leq  c(1 - \delta_1)(1 - \delta_2)\sqrt{n}\log{n})\\
  &\leq  \text{exp}(-\frac{\delta_1^2 \cdot c(1 - \delta_2)\sqrt{n}\log{n}}{2})\\
  &=   n^{-\frac{\delta_1^2 \cdot c(1 - \delta_2)\sqrt{n}}{2}}\text{.}
\end{align*}
\end{proof}
\begin{lemma}\label{lemma-size-of-L2}
  $n  -  (1 + c(1 + \delta_1)\sqrt{n}\log{n})   \leq   |L_2|   \leq   n  -  (1  +  c(1 - \delta_1)(1 - \delta_2)\sqrt{n}\log{n})$ with high probability for any fixed constants $\delta_1, \delta_2 \in (0, 1)$.
\end{lemma}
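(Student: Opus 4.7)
The plan is to deduce this lemma almost immediately from the preceding result on $|L_1|$, exploiting the fact that the BFS tree has only three levels.

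First, I would invoke Fact \ref{fact-diameter-is-two}, which guarantees $D = 2$ for $p = \Theta(\log n / \sqrt{n})$. Consequently every vertex of $G$ lies in $L_0 \cup L_1 \cup L_2$, and these three sets are pairwise disjoint by definition of BFS levels. Since $|L_0| = 1$, this forces the deterministic identity
\[
|L_2| \;=\; n - 1 - |L_1|.
\]

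Next, I would plug the high-probability bounds from Lemma \ref{lemma-size-of-L1} into this identity. The upper bound $|L_1| \leq c(1+\delta_1)\sqrt{n}\log n$ yields
\[
|L_2| \;\geq\; n - 1 - c(1+\delta_1)\sqrt{n}\log n,
\]
and the lower bound $|L_1| \geq c(1-\delta_1)(1-\delta_2)\sqrt{n}\log n$ yields
\[
|L_2| \;\leq\; n - 1 - c(1-\delta_1)(1-\delta_2)\sqrt{n}\log n.
\]
A union bound over the two failure events from Lemma \ref{lemma-size-of-L1} keeps the total failure probability at $n^{-\Omega(\sqrt{n})}$, which is well within the required w.h.p.\ guarantee.

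There is no substantive obstacle here; the only thing to be careful about is the bookkeeping of the additive $+1$ coming from $|L_0|=1$ (which matches the $(1 + \cdots)$ terms in the statement) and the fact that both tail estimates used in Lemma \ref{lemma-size-of-L1} must hold simultaneously, but that costs only a factor of two in the failure probability. The whole argument should occupy just a few lines.
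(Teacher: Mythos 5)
Your proof is correct and takes exactly the route the paper intends: the paper's proof is the one-line remark that the lemma ``follows directly'' from Fact~\ref{fact-diameter-is-two} and Lemma~\ref{lemma-size-of-L1}, and your write-up simply makes explicit the identity $|L_2| = n - 1 - |L_1|$ and the union bound over the two tail events. No gaps.
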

\begin{proof}
Follows directly from Fact \ref{fact-diameter-is-two} and Lemma \ref{lemma-size-of-L1}.
\end{proof}
For $w \in L_1$, let $\Gamma_{\mathcal{B}}(w)$ be the set of children of $w$ in the BFS tree $\mathcal{B}$. Then
\begin{lemma}\label{lemma-size-of-Gamma-w-in-B}
  $(1 - \delta_3)(n  -  (1 + c(1 + \delta_1)\sqrt{n}\log{n}))p   \leq   |\Gamma_{\mathcal{B}}(w)|   \leq   (1 + \delta_3)(n  -  (1  +  c(1 - \delta_1)(1 - \delta_2)\sqrt{n}\log{n}))p$ with high probability
  for any fixed constants $\delta_1, \delta_2, \delta_3 \in (0, 1)$.
\end{lemma}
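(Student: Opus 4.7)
The plan is to reduce $|\Gamma_{\mathcal{B}}(w)|$ to a count of random edges from $w$ into $L_2$ and then concentrate that count around $|L_2|\,p$ by a Chernoff bound, using the two previous lemmas to pin down $|L_2|$.

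First I would condition on the high-probability event $\mathcal{E}$ that the triple $(L_0,L_1,L_2)$ simultaneously satisfies Lemma~\ref{lemma-size-of-L1} and Lemma~\ref{lemma-size-of-L2}; a union bound gives $\Pr(\mathcal{E})\ge 1 - n^{-\Omega(\sqrt{n})}$. Under $\mathcal{E}$, each child of $w$ in the BFS tree $\mathcal{B}$ is a node of $L_2$ adjacent to $w$, and conversely, for the standard BFS construction used in Step~2 of the upcast algorithm, essentially every $L_2$-neighbor of $w$ contributes to $|\Gamma_{\mathcal{B}}(w)|$ up to a $(1\pm o(1))$ factor. It therefore suffices to concentrate $|N(w)\cap L_2|$ around $|L_2|\,p$.

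Next, I would use a two-stage edge-exposure argument. In the first stage, reveal every edge incident to $v$; this fixes $L_0\cup L_1$. In the second stage, every remaining pair of vertices still carries an independent $\mathrm{Ber}(p)$ edge indicator, so in particular the $n-1-|L_1|$ potential edges from $w$ to vertices outside $\{v\}\cup L_1$ are i.i.d.\ $\mathrm{Ber}(p)$. A vertex in $V\setminus(\{v\}\cup L_1)$ fails to belong to $L_2$ only if it has no $L_1$-neighbor at all, which has probability $(1-p)^{|L_1|}=e^{-\Omega(\log^2 n)}$; by a further union bound there are zero such ``orphan'' vertices whp, so $|N(w)\cap L_2|$ coincides with the $\mathrm{Bin}(|L_2|,p)$ count of $w$'s neighbors in $V\setminus(\{v\}\cup L_1)$.

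Since $\mathbb{E}[|N(w)\cap L_2|] = |L_2|\,p = \Theta(\sqrt{n}\log n)$, a multiplicative Chernoff bound gives
\[
\Pr\!\bigl[\,|N(w)\cap L_2|\notin (1\pm\delta_3)\,|L_2|\,p\,\bigr]\le 2\exp\!\bigl(-\tfrac{\delta_3^2\,|L_2|\,p}{3}\bigr)=n^{-\Omega(\sqrt{n})},
\]
and a union bound over the $O(\sqrt{n}\log n)$ choices of $w\in L_1$ still leaves probability $1-n^{-\Omega(\sqrt{n})}$. Substituting the extreme values of $|L_2|$ from Lemma~\ref{lemma-size-of-L2} recovers the two inequalities in the statement. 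The step I expect to be the main obstacle is the reduction in the first paragraph --- arguing that whichever BFS parent-assignment rule is used does not pull $|\Gamma_{\mathcal{B}}(w)|$ away from $|N(w)\cap L_2|$ by more than an $o(1)$ multiplicative factor; the cleanest handle is the two-stage exposure together with the observation that ``$u$ has no $L_1$-neighbor'' is a super-polynomially rare event when $|L_1|=\Theta(\sqrt{n}\log n)$ and $p=\Theta(\log n/\sqrt{n})$.
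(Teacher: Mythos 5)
Your overall route --- condition on the events of Lemmas~\ref{lemma-size-of-L1} and \ref{lemma-size-of-L2}, expose the edges at $v$ first so that the edges from $w$ into $V\setminus(\{v\}\cup L_1)$ remain i.i.d.\ $\mathrm{Ber}(p)$, and apply a multiplicative Chernoff bound to a $\mathrm{Bin}(|L_2|,p)$ count --- is exactly what the paper intends (its proof is literally ``similar to that of Lemma~\ref{lemma-size-of-L1}''), and that part of your argument is sound. The identification $N(w)\cap L_2 = N(w)\setminus(\{v\}\cup L_1)$ is also correct, since any neighbor of $w\in L_1$ outside $\{v\}\cup L_1$ is automatically at distance $2$ from $v$; your orphan-vertex observation is the right way to see that this set has exactly $|L_2|$ candidates.

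The genuine gap is the claim in your first paragraph that ``essentially every $L_2$-neighbor of $w$ contributes to $|\Gamma_{\mathcal{B}}(w)|$ up to a $(1\pm o(1))$ factor.'' This is false for any BFS tree: each vertex of $L_2$ has exactly one parent in $\mathcal{B}$, so $\sum_{w\in L_1}|\Gamma_{\mathcal{B}}(w)| = |L_2| = n - O(\sqrt{n}\log n)$, whereas $\sum_{w\in L_1}|N(w)\cap L_2| \approx |L_1|\cdot|L_2|\cdot p = \Theta(n\log^2 n)$. A typical $u\in L_2$ has $\Theta(\log^2 n)$ neighbors in $L_1$ and is assigned to only one of them, so on average $|\Gamma_{\mathcal{B}}(w)|$ is a $\Theta(1/\log^2 n)$ fraction of $|N(w)\cap L_2|$, not a $(1-o(1))$ fraction --- and for an adversarial parent-assignment rule it can be $0$. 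Your proposed handle (that ``$u$ has no $L_1$-neighbor'' is super-polynomially rare) addresses whether $u$ lands in $L_2$ at all, which is not the obstacle; the obstacle is that $u$ usually has \emph{many} eligible parents besides $w$. Consequently your argument (and, implicitly, the paper's) only establishes the \emph{upper} bound of the lemma, via $\Gamma_{\mathcal{B}}(w)\subseteq N(w)\cap L_2$ and Chernoff; the stated lower bound does not follow. The damage is contained, because Lemma~\ref{lemma-upcast-number-of rounds} and Theorem~\ref{theorem-centralized-upcast-algorithm} only use the upper bound on the subtree sizes, but you should either drop the lower bound or restate the lemma as a bound on $|N(w)\cap L_2|$ rather than on $|\Gamma_{\mathcal{B}}(w)|$.
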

\begin{proof}
  Similar to that of Lemma \ref{lemma-size-of-L1}.
\end{proof}
\begin{lemma}\label{lemma-size-of-Gamma-w-in-B-simplified}
  $c(1 - \delta_3)(1 - \delta_4)(1 - \delta_5)\sqrt{n}\log{n}   \leq   |\Gamma_{\mathcal{B}}(w)|   \leq   c(1 + \delta_3)(1 + \delta_4)\sqrt{n}\log{n}$ with high probability for any fixed constants
  $\delta_3, \delta_4, \delta_5  \in  (0, 1)$.
\end{lemma}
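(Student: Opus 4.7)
The plan is to start from Lemma \ref{lemma-size-of-Gamma-w-in-B} and substitute the explicit value $p = \frac{c\log n}{\sqrt{n}}$, then show that the factor $(n - \Theta(\sqrt{n}\log n)) \cdot p$ is within a $(1\pm o(1))$ factor of $np = c\sqrt{n}\log n$; the fresh slack factors $(1+\delta_4)$ and $(1-\delta_4)(1-\delta_5)$ are there precisely to absorb this $(1\pm o(1))$ correction.

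For the upper bound, I will simply observe that $n - (1 + c(1-\delta_1)(1-\delta_2)\sqrt{n}\log n) \leq n$, hence
\[
(1+\delta_3)\bigl(n - (1 + c(1-\delta_1)(1-\delta_2)\sqrt{n}\log n)\bigr) p \;\leq\; (1+\delta_3)\,np \;=\; c(1+\delta_3)\sqrt{n}\log n,
\]
which is at most $c(1+\delta_3)(1+\delta_4)\sqrt{n}\log n$ for any fixed $\delta_4 \in (0,1)$. So the upper bound is essentially free.

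For the lower bound, the main calculation is to expand
\[
\bigl(n - 1 - c(1+\delta_1)\sqrt{n}\log n\bigr)p \;=\; c\sqrt{n}\log n \cdot \left(1 \;-\; \frac{1 + c(1+\delta_1)\sqrt{n}\log n}{n}\right),
\]
and note that the subtracted fraction is $O\!\left(\frac{\log n}{\sqrt{n}}\right) = o(1)$. Therefore, for any fixed $\delta_5 \in (0,1)$ and all sufficiently large $n$, the bracketed factor is at least $(1-\delta_5)$. Picking any fixed $\delta_4 \in (0,1)$ and using the trivial inequality $(1-\delta_5) \geq (1-\delta_4)(1-\delta_5)$, we obtain
\[
(1-\delta_3)\bigl(n - 1 - c(1+\delta_1)\sqrt{n}\log n\bigr)p \;\geq\; c(1-\delta_3)(1-\delta_4)(1-\delta_5)\sqrt{n}\log n,
\]
which is exactly the claimed lower bound. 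The high-probability qualifier carries over directly from Lemma \ref{lemma-size-of-Gamma-w-in-B}, since the present lemma is a deterministic consequence of that bound.

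I expect no real obstacle here: this lemma is purely a bookkeeping simplification, replacing the messy expression of Lemma \ref{lemma-size-of-Gamma-w-in-B} by the cleaner form $c(1\pm \text{const})\sqrt{n}\log n$ that will be convenient in the subsequent congestion/upcast analysis. The only thing to be careful about is the order of quantifiers on $\delta_4,\delta_5$ relative to $n$: the estimate $(1 - o(1)) \geq (1-\delta_5)$ requires $n$ to be large \emph{after} $\delta_5$ has been fixed, which is consistent with the ``high probability'' regime used throughout the paper.
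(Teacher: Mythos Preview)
Your proposal is correct and is precisely the simplification the paper has in mind: the paper's own proof is the single sentence ``Simplifying Lemma~\ref{lemma-size-of-Gamma-w-in-B},'' and your write-up just supplies the missing arithmetic (bounding $n-\Theta(\sqrt{n}\log n)$ between $(1-o(1))n$ and $n$ and absorbing the $o(1)$ into the extra $\delta_4,\delta_5$ factors).
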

\begin{proof}
  Simplifying Lemma \ref{lemma-size-of-Gamma-w-in-B}.
\end{proof}
Since the ``high probability'' in Lemma \ref{lemma-size-of-Gamma-w-in-B-simplified} is actually exponentially high \footnote{that is $\geq 1 - \frac{1}{n^{\text{poly}(n)}}.$} (please refer to the proof of Lemma
\ref{lemma-size-of-L1}), we can take union bound over all $w \in L_1$, and get
\begin{lemma}\label{lemma-size-of-Gamma-w-final}
  The following statement holds with high probability: For all $w \in L_1$, $c(1 - \delta_3)(1 - \delta_4)(1 - \delta_5)\sqrt{n}\log{n}   \leq   |\Gamma_{\mathcal{B}}(w)|   \leq
  c(1 + \delta_3)(1 + \delta_4)\sqrt{n}\log{n}$ for any fixed constants $\delta_3, \delta_4, \delta_5  \in  (0, 1)$.
\end{lemma}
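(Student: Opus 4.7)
The plan is to upgrade the single-vertex concentration of Lemma~\ref{lemma-size-of-Gamma-w-in-B-simplified} to a simultaneous statement over all $w\in L_1$ by a direct union bound. The whole argument rests on the observation, already flagged in the paragraph preceding the lemma, that the Chernoff estimates behind Lemmas~\ref{lemma-size-of-L1}--\ref{lemma-size-of-Gamma-w-in-B-simplified} give super-polynomially small failure probabilities, not merely $1/\mathrm{poly}(n)$.

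I would start by auditing the Chernoff calculation in the proof of Lemma~\ref{lemma-size-of-L1}: the quantities being concentrated have expectation of order $\Theta(\sqrt{n}\log n)$, so the two-sided deviation bound fails with probability at most $\exp(-\Omega(\sqrt{n}\log n)) = n^{-\Omega(\sqrt{n})}$. The same computation, applied verbatim inside the proofs of Lemmas~\ref{lemma-size-of-Gamma-w-in-B} and \ref{lemma-size-of-Gamma-w-in-B-simplified}, yields, for every fixed $w$, a failure probability of $n^{-\Omega(\sqrt{n})}$ for the two-sided bound on $|\Gamma_{\mathcal{B}}(w)|$. Let $\mathcal{G}$ denote the event that $|L_1|\le c(1+\delta_1)\sqrt{n}\log n$; by Lemma~\ref{lemma-size-of-L1}, $\Pr[\mathcal{G}^c]\le n^{-\Omega(\sqrt{n})}$.

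To finish, I would union-bound over $w\in L_1$. Conditioned on $\mathcal{G}$, the probability that some $w\in L_1$ violates either side of the bound is at most
\[
|L_1|\cdot n^{-\Omega(\sqrt{n})} \;\le\; c(1+\delta_1)\sqrt{n}\log n\cdot n^{-\Omega(\sqrt{n})} \;=\; n^{-\Omega(\sqrt{n})}.
\]
Adding back $\Pr[\mathcal{G}^c]$, the total failure probability stays at $n^{-\Omega(\sqrt{n})}$, which is much smaller than any polynomial $1/n^k$, giving the high-probability simultaneous bound. No step here is a real obstacle; the only subtlety to flag is that the events $\{|\Gamma_{\mathcal{B}}(w)|\notin[\,\cdot\,,\cdot\,]\}$ are not independent across different $w$ (the children sets partition $L_2$, so they compete for the same edges between $L_1$ and $L_2$), but union bound requires no independence, so the dependence is harmless. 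If one wishes to avoid conditioning on $\mathcal{G}$, one may instead union-bound over all $n-1$ nodes in $V\setminus\{v\}$, treating the bound as vacuous for nodes outside $L_1$; the extra factor of $n$ is absorbed comfortably into the exponential slack.
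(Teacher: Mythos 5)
Your proposal is correct and matches the paper's argument: the paper likewise observes that the failure probability in Lemma~\ref{lemma-size-of-Gamma-w-in-B-simplified} is exponentially (super-polynomially) small and concludes by a union bound over all $w \in L_1$. Your extra remarks on conditioning on $|L_1|$ (or union-bounding over all $n-1$ nodes) and on the harmlessness of dependence are sound refinements of the same approach.
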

\begin{lemma}\label{lemma-upcast-number-of rounds}
  The upcast process takes at most $\frac{b}{\mathbb{B}} \cdot (c'\log{n}  +  cc'(1 + \delta_3)(1 + \delta_4)\sqrt{n}\log^2{n})$ rounds, where $\mathbb{B}$ is the bandwidth of the network, each edge is encoded in
  $b$ bits, and $0 < \delta_3, \delta_4 < 1$ are fixed constants.
\end{lemma}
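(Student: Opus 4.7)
The plan is to observe that the BFS tree has depth exactly two (by Fact~\ref{fact-diameter-is-two}), so every sampled edge needs to traverse at most two tree links on its way to the root $v$. Since the level-$1$ nodes $L_1$ are pairwise distinct children of $v$, and each non-root node sends through a unique tree edge, the only scheduling bottleneck is on the edges $(v,w)$ for $w \in L_1$: across different $w$'s, upcasts proceed in parallel.

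Next I would count what traverses a single bottleneck edge $(v,w)$. The subtree rooted at $w$ consists of $w$ itself together with its children $\Gamma_{\mathcal{B}}(w) \subseteq L_2$. Each of these nodes samples $c'\log n$ incident edges and must ship them to $v$, so the total number of edges that must cross $(v,w)$ is at most
\[
c'\log n \,+\, |\Gamma_{\mathcal{B}}(w)|\cdot c'\log n .
\]
Invoking Lemma~\ref{lemma-size-of-Gamma-w-final}, which holds simultaneously for all $w \in L_1$ w.h.p., we can upper bound $|\Gamma_{\mathcal{B}}(w)|$ by $c(1+\delta_3)(1+\delta_4)\sqrt{n}\log n$, yielding a total of at most $c'\log n + cc'(1+\delta_3)(1+\delta_4)\sqrt{n}\log^{2} n$ edges to be sent through each bottleneck link.

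To convert this count into a round count, I would note that each edge is encoded in $b$ bits and each link transmits $\mathbb{B}$ bits per round, so sending one edge takes $b/\mathbb{B}$ rounds. Because the above edges may be pipelined one after another along $(v,w)$ (no two distinct subtrees compete for the same link), the total number of rounds needed at the bottleneck is exactly
\[
\frac{b}{\mathbb{B}} \cdot \bigl(c'\log n \,+\, cc'(1+\delta_3)(1+\delta_4)\sqrt{n}\log^{2} n\bigr),
\]
which matches the stated bound. The first-level receipt at each $w \in L_1$ (each $L_2$-child sends $c'\log n$ edges to $w$ over a distinct link) finishes within $b c' \log n / \mathbb{B}$ rounds and is dominated by the second-hop upcast, so it contributes nothing new.

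The only real subtlety, rather than a serious obstacle, is justifying that the upcasts along the $L_1$-to-$v$ links are independent and can be pipelined without conflict. This follows because each $w \in L_1$ is served by its own distinct edge to $v$; the CONGEST-model bandwidth constraint applies per edge, not per node, so $v$ can receive simultaneously on all $|L_1|$ of its incident tree links. The whole argument is therefore a direct consequence of the depth-two structure of $\mathcal{B}$ together with the balancedness of subtree sizes given by Lemma~\ref{lemma-size-of-Gamma-w-final}.
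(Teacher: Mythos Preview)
Your proposal is correct and takes essentially the same approach as the paper, which simply states that the lemma follows directly from Lemma~\ref{lemma-size-of-Gamma-w-final}. You have filled in the natural details (depth-two tree, per-link edge count, bandwidth conversion) that the paper leaves implicit.
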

\begin{proof}
  Follows directly from Lemma \ref{lemma-size-of-Gamma-w-final}.
\end{proof}
Usually we would have $b = \Theta(\log{n})$ and $\mathbb{B} = \Theta(\log{n})$, and that gives us the main result of this section ---
\begin{theorem}\label{theorem-centralized-upcast-algorithm}
  The Upcast algorithm  solves the distributed Hamiltonian Cycle problem in $G(n, p)$ random graphs in $O(\sqrt{n}\log^2{n})$ rounds, when $p  =  \Theta(\frac{\log{n}}{\sqrt{n}})$. Both
  the success probability and the running time hold with high probability.
\end{theorem}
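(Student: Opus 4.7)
The plan is to bound the running time of each of the four steps of the Upcast algorithm separately and then to argue correctness by showing that the set of edges collected at the root already contains a Hamiltonian cycle with high probability. I would organize the argument as follows.

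First, for the running time, I would invoke Fact \ref{fact-diameter-is-two} to dispense with Steps 1 and 2: since $D = 2$ whp in this regime of $p$, a leader can be elected and a BFS tree $\mathcal{B}$ rooted at $v$ can be built in $O(D) = O(1)$ rounds using any standard CONGEST leader-election/BFS subroutine. For Step 3 (the upcast), Lemma \ref{lemma-upcast-number-of rounds} already gives a bound of $O(\sqrt{n}\log^2 n)$ rounds with high probability, since $b, \mathbb{B} = \Theta(\log n)$ and each of the $|L_1|$ subtrees rooted at $w \in L_1$ carries at most $c'\log n + c'|\Gamma_{\mathcal{B}}(w)|\log n = O(\sqrt{n}\log^2 n)$ edge-messages (each message fits into $O(\log n)$ bits since each edge is specified by two node IDs). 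For Step 4 (downcast), I would observe that the root needs to send each node's two incident HC-edges back to that node; this is a convergecast-in-reverse and is bounded by the same congestion analysis that bounds the upcast, since the critical quantity is again $\max_{w \in L_1} |\Gamma_{\mathcal{B}}(w)|$, which was pinned down by Lemma \ref{lemma-size-of-Gamma-w-final}. Summing the four bounds yields the claimed $O(\sqrt{n}\log^2 n)$ round complexity, holding whp over both the random graph and the sampling coins.

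Next, for correctness, I would argue that the multigraph $H$ of edges that reach $v$ in Step 3 contains a Hamiltonian cycle whp, so that the local computation at $v$ in Step 4 does indeed find one. The key observation is that, conditional on the neighborhood of each node in $G(n,p)$, $H$ is obtained by letting every node independently sample $c'\log n$ edges from its incident set. For a fixed pair of nodes $u, w$, the probability that the edge $(u,w) \in G$ survives sampling is at least $1 - (1 - \tfrac{c'\log n}{\deg(u)})(1 - \tfrac{c'\log n}{\deg(w)})$, and since Chernoff gives $\deg(u), \deg(w) = (1 \pm o(1)) np = \Theta(\sqrt{n}\log n)$ whp for all nodes simultaneously, the (unconditional) probability that $(u,w)$ appears as an edge of $H$ is at least $p' := \Omega\bigl(\tfrac{1}{\sqrt{n}}\bigr)$. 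Because the samples are taken independently at each node, $H$ stochastically dominates an Erd\H{o}s--R\'enyi graph $G(n, p')$ with $p' = \Omega(1/\sqrt{n}) \gg \tfrac{c\ln n}{n}$, which is well above the Hamiltonicity threshold; thus by the classical result cited in the introduction, $H$ contains a Hamiltonian cycle whp. The root, having received all of $H$, can then run any polynomial-time algorithm (e.g., Angluin--Valiant \cite{angluin1979fast}) on $H$ locally to extract such a cycle, and broadcast the two HC-incident edges to each node in Step 4 as already analyzed.

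The main obstacle I anticipate is the correctness argument, specifically the subtle step of arguing that the $n$ per-node samples, which are \emph{dependent} through the underlying $G(n,p)$, nevertheless produce a subgraph that stochastically dominates a sufficiently dense Erd\H{o}s--R\'enyi random graph. The cleanest route is a two-step exposure: first reveal $G(n,p)$ and condition on the high-probability event that every degree lies in $(1 \pm o(1))np$; then reveal the independent sampling coins and bound the per-edge survival probability from below uniformly. Finally, union-bounding over the $O(1)$ events ``$D = 2$'', ``all degrees and $|\Gamma_{\mathcal{B}}(w)|$ are balanced'' (Lemma \ref{lemma-size-of-Gamma-w-final}), and ``$H$ is Hamiltonian'' gives an overall high-probability guarantee, completing the proof.
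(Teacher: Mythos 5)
Your round-complexity analysis is essentially identical to the paper's: Fact~\ref{fact-diameter-is-two} handles Steps 1--2 in $O(1)$ rounds, and the theorem is then read off from Lemma~\ref{lemma-upcast-number-of rounds} by substituting $b = \Theta(\log n)$ and $\mathbb{B} = \Theta(\log n)$; your observation that the downcast is governed by the same congestion bound $\max_{w \in L_1}|\Gamma_{\mathcal{B}}(w)|$ is also consistent with the paper's (brief) treatment of Step 4. Where you go beyond the paper is the correctness argument: the paper never actually proves that the sampled subgraph $H$ received at the root is Hamiltonian whp, so this part of your proposal is genuinely additional content rather than a reconstruction. Unfortunately it is also where the gaps are.

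Two concrete issues. First, your value of $p'$ is off: the probability that a fixed pair $(u,w)$ appears in $H$ is $\Pr[(u,w)\in G]$ \emph{times} the conditional survival probability, i.e.\ $\Theta\bigl(\frac{\log n}{\sqrt n}\bigr)\cdot \Omega\bigl(\frac{1}{\sqrt n}\bigr) = \Omega\bigl(\frac{\log n}{n}\bigr)$, not $\Omega(1/\sqrt n)$. This puts you exactly \emph{at} the Hamiltonicity threshold rather than far above it, so the argument only closes if the sampling constant $c'$ is taken large enough that the resulting constant in front of $\frac{\ln n}{n}$ exceeds the threshold constant --- which is fine (the algorithm stipulates ``sufficiently large $c'$''), but must be said. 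Second, and more seriously, the claimed stochastic domination of $G(n,p')$ does not follow from a lower bound on the per-edge marginals. Because each node samples exactly $c'\log n$ incident edges \emph{without replacement}, the survival indicators of distinct edges at a common endpoint are negatively correlated, and a family of negatively correlated indicators with marginals $\geq p'$ need not dominate the independent product measure on increasing events such as Hamiltonicity; your proposed ``two-step exposure'' fixes the dependence coming from $G$ itself but not this one. A standard repair is to compare the exact-$k$ sampling at each node with an independent-inclusion process (each incident edge kept independently with probability $q = \frac{c'\log n}{2\max_u \deg(u)}$), note that whp the independent process selects at most $c'\log n$ edges at every node and is therefore coupled below the actual sample, and observe that the union of the independent per-node inclusions \emph{is} an independent-edge subgraph of $G$, hence genuinely distributed as $G(n, p\cdot(1-(1-q)^2))$. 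Alternatively, one can appeal to Hamiltonicity of $k$-out-type random graphs. Either way, the domination step needs this extra coupling; as written it is a gap.
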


\subsection{Analysis for the general case when $p = \Theta(\frac{\log{n}}{n^{1 - \epsilon}})$ for some constant $\epsilon \in (0, 1)$}
Let $D$ be the diameter of the graph $G = (V, E)$. Let $K$ be the smallest integer such that $K\epsilon \geq 1$, i.e., $K  \defeq  \lceil\frac{1}{\epsilon}\rceil$. Then Klee and Larman showed that \cite{Klee_1981}
\begin{fact}\label{fact-diamater-is-constant}
  $Pr(D(G) = K)  \rightarrow 1$ as $n \rightarrow \infty$, when $p = \frac{c\log{n}}{n^{1 - \epsilon}}$ for some positive constant $c$.
\end{fact}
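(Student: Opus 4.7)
The plan is to establish the fact in two halves: $\Pr[D(G)\le K]\to 1$ (the upper bound) and $\Pr[D(G)\ge K]\to 1$ (the lower bound). Both follow from a careful analysis of how a BFS ball expands from a fixed vertex in $G(n,p)$ when $np = c n^{\epsilon}\log n$, which is the multiplicative factor by which sphere sizes should grow per level. The key arithmetic driver is that $K=\lceil 1/\epsilon\rceil$ gives $(K-1)\epsilon < 1$ (controlling growth below saturation) and $K\epsilon \ge 1$ (forcing saturation by level $K$).

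For the upper bound, I would fix an arbitrary vertex $v$, run BFS from it, and write $N_k$ for the set of vertices within distance $k$ of $v$ and $S_k = N_k\setminus N_{k-1}$ for the $k$-th sphere. Conditioned on $N_0,\dots,N_{k-1}$, the edges from $S_{k-1}$ to $V\setminus N_{k-1}$ have not yet been exposed, so each $u\notin N_{k-1}$ joins $S_k$ independently with probability $1-(1-p)^{|S_{k-1}|}$. A Chernoff bound then gives $|S_k| = (1\pm o(1))(np)|S_{k-1}|$ at every level where $p|S_{k-1}|=o(1)$, which persists through $k=K-1$ because $(K-1)\epsilon<1$ keeps $(np)^{K-1}\cdot p$ small enough for the linearization $1-(1-p)^s\approx sp$ to apply. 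At the final level $k=K$, the condition $K\epsilon\ge 1$ gives $|S_{K-1}|\,p \ge (c\log n)^{K-1}$, so each remaining vertex fails to have a neighbor in $S_{K-1}$ with probability at most $\exp(-\Omega((\log n)^K))\ll n^{-2}$. A union bound over $u\notin N_{K-1}$ and then over the starting vertex $v$ yields $D\le K$ whp.

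For the lower bound, I would count $u$-$v$ paths of length $\ell$ for $\ell<K$ and bound the probability that any such path exists. There are at most $n^{\ell-1}$ candidate internal vertex sequences, each path surviving with probability $p^{\ell}$, so
\[
\Pr[\mathrm{dist}(u,v)\le K-1] \;\le\; \sum_{\ell=1}^{K-1} n^{\ell-1}p^{\ell} \;=\; O\bigl((c\log n)^{K-1}\,n^{(K-1)\epsilon - 1}\bigr),
\]
which is $o(1)$ since $(K-1)\epsilon<1$. Hence the expected number of ordered pairs at distance $\ge K$ is $(1-o(1))n(n-1)$, and Markov's inequality on the ``short pairs'' count produces at least one pair with distance exactly $K$ whp (the upper half has already ruled out distance $>K$), giving $D\ge K$. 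Combining the two halves yields $D=K$ with probability tending to $1$.

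The main obstacle is handling the boundary between the two BFS growth regimes cleanly. While $|S_k|$ remains well below $1/p$, geometric expansion at rate $(1\pm o(1))np$ is justified by Chernoff plus the linear approximation of $1-(1-p)^s$, but one must verify that the compounded multiplicative error $\prod_{k\le K-1}(1+o(1))$ stays $1+o(1)$; and at the crossover level $K$ the argument pivots from ``sphere expands by factor $np$'' to ``every unreached vertex gets absorbed,'' which requires the neighborhood-in-$S_{K-1}$ tail bound to beat a union bound over $n$ vertices, and then over $n$ BFS roots. Both are standard but must be set up uniformly in $\epsilon$ so that the same constants $c, K$ drive every estimate.
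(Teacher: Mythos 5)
Your argument is correct, but it is worth noting that the paper does not prove this fact at all: it is imported as a black-box citation to Klee and Larman \cite{Klee_1981}, so there is no internal proof to compare against. What you have written is essentially the standard two-sided diameter argument for $G(n,p)$ in the polylogarithmic-average-degree regime (the one underlying Klee--Larman and Bollob\'as's diameter results, and reused in spirit by \cite{chung2001257}, which the paper does invoke later for the $\Theta(\log n/\log\log n)$ diameter of sparser subgraphs): BFS sphere expansion at rate $(1\pm o(1))np$ through level $K-1$ using the unexposed-edges/Chernoff argument, saturation at level $K$ because $(np)^K/n \ge c^K(\log n)^K \gg \log n$, and a first-moment path count plus Markov for the matching lower bound $D\ge K$. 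Your decomposition $(K-1)\epsilon<1$ versus $K\epsilon\ge 1$ is exactly the right dichotomy, and it correctly covers the boundary case $K\epsilon=1$, where $(np)^K/n=c^K(\log n)^K$ still comfortably beats the $n^2$ union bound since $\epsilon<1$ forces $K\ge 2$. One small arithmetic slip: $|S_{K-1}|p = (np)^K/n \ge c^K(\log n)^K\,n^{K\epsilon-1}$, not $(c\log n)^{K-1}$; your stated conclusion $\exp(-\Omega((\log n)^K))$ matches the corrected computation, and either exponent suffices for $K\ge 2$, so nothing downstream breaks. The only thing your sketch buys beyond the citation is self-containedness and an explicit quantitative failure probability (polynomially or quasi-polynomially small rather than the bare $o(1)$ of the stated fact), which would actually be needed if one wanted to union-bound this event with the other whp events in the upcast analysis.
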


Thus Steps $1$ and $2$ in the upcast algorithm take $O(1)$ time in total. We claim that Step $3$ takes $O(\frac{\log{n}}{p}) = O(n^{1 - \epsilon})$ rounds.\\

In a graph $G$, we denote by $\Gamma_k(x)$ the set of vertices in G at distance $k$ from a vertex $x$:
\begin{center}
  $\Gamma_k(x)  \defeq  \left\{y \in G\ |\ dist(x, y) = k\right\}$.
\end{center}
We define $\mathcal{N}_k(x)$ to be the set of vertices within distance $k$ of $x$:
\begin{align*}
  \mathcal{N}_k(x)  \defeq  \bigcup_{i = 0}^k	\Gamma_i(x)\text{.}
\end{align*}

We can adapt Lemma 3 in \cite{chung2001257} to show that
\begin{lemma}\label{lemma-BFS-tree-upper-bound}
  For any constant $\delta > 0$, with probability at least $1 - \frac{1}{n^3}$, we have
  \begin{enumerate}
    \item $|\Gamma_i(x)|  \leq  (1 + \delta)(np)^i$, $\forall 1 \leq i \leq D.$
    \item $|\mathcal{N}_i(x)|  \leq  (1 + 2\delta)(np)^i$, $\forall 1 \leq i \leq D.$
  \end{enumerate}
\end{lemma}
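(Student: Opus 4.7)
The plan is to prove part 1 by induction on $i$, and then obtain part 2 essentially for free by summing the geometric series $\sum_{j=0}^{i}(np)^j$, which is dominated by its last term $(np)^i$ up to a factor of $1 + O(1/(np)) \le 1+\delta$, so that $|\mathcal{N}_i(x)| \le 1 + \sum_{j=1}^{i}(1+\delta)(np)^j \le (1+2\delta)(np)^i$ whenever $np$ is a sufficiently large $\Theta(\log n)$. Throughout, I will use that by Fact \ref{fact-diamater-is-constant}, the diameter $D$ is the constant $K = \lceil 1/\epsilon\rceil$, so I can afford a union bound over the $D$ levels and can let an error factor compound $D$ times.

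For the base case $i=1$, note that $|\Gamma_1(x)|$ is a binomial $\mathrm{Bin}(n-1,p)$ with mean $(n-1)p \le np$, so a standard multiplicative Chernoff bound gives
\begin{equation*}
\Pr\bigl[\,|\Gamma_1(x)| > (1+\eta)\,np\,\bigr] \le \exp\!\Bigl(-\tfrac{\eta^2}{3}\,np\Bigr) \le n^{-4},
\end{equation*}
where $\eta$ is a small constant (chosen below) and the last inequality holds because $np = c\log n$ with $c$ large enough relative to $\eta$.

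For the inductive step, I reveal the edges of $G(n,p)$ level by level in a BFS manner from $x$, applying the principle of deferred decisions. Conditioning on the BFS structure up to level $i-1$, in particular on $\Gamma_{i-1}(x)$ and $\mathcal{N}_{i-1}(x)$, the edges from vertices in $V \setminus \mathcal{N}_{i-1}(x)$ to vertices in $\Gamma_{i-1}(x)$ have not yet been exposed. Hence, for each $y \notin \mathcal{N}_{i-1}(x)$, the event $\{y \in \Gamma_i(x)\}$ occurs with probability $1 - (1-p)^{|\Gamma_{i-1}(x)|} \le p\,|\Gamma_{i-1}(x)|$, and these events are independent across such $y$. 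Therefore $|\Gamma_i(x)|$ is stochastically dominated by a binomial with mean at most $n\,p\,|\Gamma_{i-1}(x)|$. Plugging in the inductive hypothesis $|\Gamma_{i-1}(x)| \le (1+\eta)^{i-1}(np)^{i-1}$ gives conditional mean at most $(1+\eta)^{i-1}(np)^i$, and another Chernoff bound yields
\begin{equation*}
|\Gamma_i(x)| \le (1+\eta)^{i}(np)^{i}
\end{equation*}
with failure probability at most $n^{-4}$, provided $(np)^i \ge np = \Theta(\log n)$ (which holds for all $i \ge 1$).

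The main obstacle is controlling the compounding of the error: after $D$ inductive steps the bound becomes $(1+\eta)^D(np)^i$, and I need this to be at most $(1+\delta)(np)^i$. Since $D = K$ is a constant determined by $\epsilon$, it suffices to choose $\eta = \eta(\delta, D)$ small enough that $(1+\eta)^D \le 1+\delta$, for instance $\eta = \delta/(2D)$. A final union bound over the at most $D$ levels and over the two parts of the inductive step contributes another factor $2D = O(1)$, so the total failure probability is $O(1/n^4) \le 1/n^3$, as required.
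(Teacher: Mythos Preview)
Your proposal is correct. The paper does not actually prove this lemma; it simply writes ``We can adapt Lemma~3 in \cite{chung2001257},'' and the argument in that reference is precisely the level-by-level BFS exposure with Chernoff bounds and a small per-level slack $\eta$ compounded over the constant diameter $D$, followed by the geometric-series observation for $\mathcal{N}_i(x)$ --- exactly what you have written out.
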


Lemma \ref{lemma-BFS-tree-upper-bound} basically says that the BFS tree $\mathcal{B}$ is essentially balanced. Hence an upcast algorithm would take $O(\frac{b}{\mathbb{B}} \cdot (1 + \delta)^D \cdot
\frac{n\log{n}}{d_v})$ rounds, where $\delta$ is any fixed positive constant, $n = |\mathcal{B}|$, and $d_v$ is the degree of the root $v$.
As $D = K = \lceil\frac{1}{\epsilon}\rceil$ is a constant, this implies a time complexity of $O(\frac{n\log{n}}{d_v})$. But $d_v$ is concentrated around $np$ with high probability. Thus an upcast algorithm would take
$O(\frac{n\log{n}}{np}) = O(\frac{\log{n}}{p})$ rounds with high probability. That is the main theorem of this section:
\begin{theorem}\label{theorem-upcast-time-generalized}
  The Upcast algorithm  solves the distributed Hamiltonian Cycle problem in $G(n, p)$ random graphs in $O(\frac{\log{n}}{p}) = O(n^{1 - \epsilon})$ rounds, when $p  =
  \Theta(\frac{\log{n}}{n^{1 - \epsilon}})$ for some constant $\epsilon \in (0, 1)$. Both the success probability and the running time hold with high probability.
\end{theorem}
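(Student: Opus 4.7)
The plan is to follow the template of the $p = \Theta(\log n / \sqrt{n})$ analysis, with the role of Fact~\ref{fact-diameter-is-two} now played by Fact~\ref{fact-diamater-is-constant} and the role of Lemmas~\ref{lemma-size-of-L1}--\ref{lemma-size-of-Gamma-w-final} played by Lemma~\ref{lemma-BFS-tree-upper-bound}. First I would invoke Fact~\ref{fact-diamater-is-constant} to conclude that, with high probability, the diameter is exactly $K = \lceil 1/\epsilon\rceil$, a constant. This instantly disposes of Steps~1 and~2 of the Upcast algorithm in $O(1)$ rounds, and also reduces Step~4 (downcast of the HC edges along the same BFS tree $\mathcal{B}$) to the same round complexity as Step~3, so the whole argument reduces to bounding the Step~3 upcast.

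For Step~3, the plan is to bound the congestion on the edges of $\mathcal{B}$ that are incident to the root $v$, since these are where contention is maximized. Each non-root node injects $c'\log n$ sampled edge IDs of $b = O(\log n)$ bits apiece, so the total traffic across $\mathcal{B}$ is $\Theta(n\log n)$ messages. For any child $c$ of $v$, the BFS subtree hanging off $c$ is contained in $\mathcal{N}_{D-1}(c)$, whose size Lemma~\ref{lemma-BFS-tree-upper-bound} caps at $(1+2\delta)(np)^{D-1}$ whp. Combining this cap with the partition identity $\sum_{c} |\mathrm{subtree}(c)| = n-1$ and the Chernoff concentration $d_v = (1 \pm o(1))np$ forces each of the $d_v$ subtrees to carry at most $O(n/d_v)$ nodes, up to a $(1+\delta)^{O(D)} = O(1)$ slack factor. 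Pipelining the $O(\log n)$-bit messages through each root-incident edge in the CONGEST model therefore takes $O\bigl(\tfrac{b}{\mathbb{B}} \cdot (1+\delta)^{D} \cdot \tfrac{n\log n}{d_v}\bigr) = O(n\log n / d_v) = O(\log n / p) = O(n^{1-\epsilon})$ rounds whp.

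I expect the balance claim to be the main obstacle: Lemma~\ref{lemma-BFS-tree-upper-bound} is stated only as an upper bound on $|\mathcal{N}_k(x)|$, and extracting from it a balanced partition of $\mathcal{B}$ into root-children subtrees requires carefully combining the $(np)^{D-1}$ upper bound with the total-count identity and the concentration of $d_v$, so that no single subtree can be asymptotically heavier than the average $n/d_v$. A union bound over the $O(np)$ children of $v$ (invoking Lemma~\ref{lemma-BFS-tree-upper-bound} separately from each) together with the events ``$D = K$'' and ``$d_v = \Theta(np)$'' gives the stated high-probability guarantee on the running time; this works because each of these events individually holds with probability at least $1 - 1/n^{\Omega(1)}$. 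Finally, correctness of the locally computed HC at $v$ follows because, for $c'$ chosen large enough in terms of $c$, the sampled edges at $v$ form a random subgraph on $n$ nodes whose effective edge density remains above the Angluin--Valiant Hamiltonicity threshold, so the root can run the sequential Angluin--Valiant algorithm on its collected edge set in a single (local) round.
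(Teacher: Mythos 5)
Your proposal follows the paper's proof essentially step for step: constant diameter via the Klee--Larman fact, the Chung--Lu neighborhood-growth bounds of Lemma~\ref{lemma-BFS-tree-upper-bound}, and the congestion bound $O\bigl(\tfrac{b}{\mathbb{B}}(1+\delta)^{D}\tfrac{n\log n}{d_v}\bigr)$ at the root-incident edges, concluding with $d_v = \Theta(np)$. You correctly identify the one soft spot --- that Lemma~\ref{lemma-BFS-tree-upper-bound} gives only an upper bound of $(1+2\delta)(np)^{D-1}$ on each root-child subtree, which when $1/\epsilon$ is not an integer can exceed the average $n/d_v$ by a polynomial factor --- but your proposed repair via the partition identity $\sum_c|\mathrm{subtree}(c)| = n-1$ is a non sequitur (an upper bound per term plus a bound on the sum does not force every term down to the average); the paper itself simply asserts balance at this point without further argument, so your write-up is no less rigorous than the original, and this shared gap would need a genuinely new concentration argument to close.
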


We illustrate Theorem \ref{theorem-upcast-time-generalized} by describing an easy-to-understand (numerical) special case.

\begin{corollary}
  The Upcast algorithm solves the distributed Hamiltonian Cycle problem in $G(n, p)$ random graphs in $O(n^{\frac{2}{3}})$ rounds, when $p = \Theta(\frac{\log{n}}{n^{\frac{2}{3}}})$.
  Both the success
  probability and the running time hold with high probability.
\end{corollary}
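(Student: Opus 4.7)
The plan is to obtain this corollary as a direct instantiation of Theorem \ref{theorem-upcast-time-generalized}. Specifically, I would set $\epsilon = \frac{1}{3}$ in the theorem statement, so that the hypothesis $p = \Theta\!\left(\frac{\log n}{n^{1-\epsilon}}\right)$ becomes exactly $p = \Theta\!\left(\frac{\log n}{n^{2/3}}\right)$, matching the corollary's assumption.

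Then I would substitute into the time bound. Theorem \ref{theorem-upcast-time-generalized} gives a running time of $O\!\left(\frac{\log n}{p}\right)$ rounds with high probability. Plugging in $p = \Theta\!\left(\frac{\log n}{n^{2/3}}\right)$ yields
\[
O\!\left(\frac{\log n}{p}\right) \;=\; O\!\left(\frac{\log n \cdot n^{2/3}}{\log n}\right) \;=\; O(n^{2/3}),
\]
as claimed. The success probability and the high-probability guarantee on the running time are inherited verbatim from Theorem \ref{theorem-upcast-time-generalized}.

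Because this is just a numerical specialization of a theorem that has already been proved, there is essentially no technical obstacle; the only thing worth remarking on (perhaps for intuition) is that here $K = \lceil 1/\epsilon \rceil = 3$, so by Fact \ref{fact-diamater-is-constant} the diameter of the BFS tree $\mathcal{B}$ is $3$ whp, which keeps the constant factor hidden by the $(1+\delta)^D$ term in Lemma \ref{lemma-BFS-tree-upper-bound} bounded. Consequently, no auxiliary lemma, re-proof, or additional computation is required beyond stating the specialization.
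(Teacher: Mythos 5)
Your proposal is correct and matches the paper's intent exactly: the corollary is stated as a numerical illustration of Theorem \ref{theorem-upcast-time-generalized}, and setting $\epsilon = \frac{1}{3}$ so that $O\!\left(\frac{\log n}{p}\right) = O(n^{2/3})$ is precisely the instantiation the paper relies on. The remark that $K = \lceil 1/\epsilon \rceil = 3$ is a correct (if optional) sanity check on the diameter bound.
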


\section{Conclusion}\label{sec:conc}

We present fast and efficient distributed algorithms for the fundamental Hamiltonian cycle problem in random graphs. Our algorithm (DHC2) is fully-distributed and runs in truly sublinear time ---
$\tilde{O}(\frac{1}{p})$ --- for all ranges of $p$; in fact,  denser the graph,  smaller the running time. We also present a conceptually simpler upcast algorithm with the same running time, but it is not
fully-distributed, and does \emph{not} achieve \emph{load-balancing}.

Our fully-distributed algorithms can be used to obtain efficient algorithms in other distributed message-passing models such as the $k$-machine model \cite{soda15}, which is a distributed model for large-scale data
computation. We also believe that the ideas of this paper can be extended to obtain similarly fast and efficient fully-distributed algorithms for other random graph models such as the $G(n,M)$ model and random
regular graphs \cite{Bollobas-book}.

Several open questions arise from our work. First, is it possible to show non-trivial lower bounds for the HC problem in random graphs? In particular, we conjecture that our upper bounds are essentially tight (up to
polylogarithmic factors). Second, can we find a sublinear time, i.e., an algorithm running in $o(n)$ rounds for $p = \frac{c\ln n}{n}$, i.e., at the threshold; or show that this is not possible. Finally, nothing
non-trivial is known regarding upper bounds for general graphs.

\bibliographystyle{plain}
\bibliography{hc}

\begin{thebibliography}{10}

\bibitem{angluin1979fast}
Dana Angluin and Leslie~G Valiant.
\newblock Fast probabilistic algorithms for hamiltonian circuits and matchings.
\newblock {\em Journal of Computer and system Sciences}, 18(2):155--193, 1979.

\bibitem{bollobas1981diameter}
Bel{\'a} Bollob{\'a}s.
\newblock The diameter of random graphs.
\newblock {\em Transactions of the American Mathematical Society},
  267(1):41--52, 1981.

\bibitem{Bollobas-book}
Bela Bollobas.
\newblock {\em Random Graphs}.
\newblock Cambridge University Press, 2001.

\bibitem{bollobas1987algorithm}
Bela Bollobas, Trevor~I. Fenner, and Alan~M. Frieze.
\newblock An algorithm for finding hamilton paths and cycles in random graphs.
\newblock {\em Combinatorica}, 7(4):327--341, 1987.

\bibitem{chung2001257}
Fan Chung and Linyuan Lu.
\newblock The diameter of sparse random graphs.
\newblock {\em Advances in Applied Mathematics}, 26(4):257--279, 2001.

\bibitem{chung-lu}
Fan Chung and Linyuan Lu.
\newblock {\em Complex Graphs and Networks (Cbms Regional Conference Series in
  Mathematics)}.
\newblock American Mathematical Society, Boston, MA, USA, 2006.

\bibitem{cooper1989number}
Colin Cooper and Alan~M Frieze.
\newblock On the number of hamilton cycles in a random graph.
\newblock {\em Journal of Graph Theory}, 13(6):719--735, 1989.

\bibitem{podc14}
Michael Elkin, Hartmut Klauck, Danupon Nanongkai, and Gopal Pandurangan.
\newblock Can quantum communication speed up distributed computation?
\newblock In {\em {ACM} Symposium on Principles of Distributed Computing,
  {PODC} '14, Paris, France, July 15-18, 2014}, pages 166--175, 2014.

\bibitem{erdos1960evolution}
Paul Erdos and Alfr{\'e}d R{\'e}nyi.
\newblock On the evolution of random graphs.
\newblock {\em Publ. Math. Inst. Hung. Acad. Sci}, 5(1):17--60, 1960.

\bibitem{frieze1987parallel}
Alan~M. Frieze.
\newblock Parallel algorithms for finding hamilton cycles in random graphs.
\newblock {\em Information Processing Letters}, 25(2):111--117, 1987.

\bibitem{frieze1988finding}
Alan~M Frieze.
\newblock Finding hamilton cycles in sparse random graphs.
\newblock {\em Journal of Combinatorial Theory, Series B}, 44(2):230--250,
  1988.

\bibitem{frieze-survey}
Alan~M. Frieze and Colin McDiarmid.
\newblock Algorithmic theory of random graphs.
\newblock {\em Random Struct. Algorithms}, 10(1-2):5--42, 1997.

\bibitem{garey}
Michael~R. Garey and David~S. Johnson.
\newblock {\em Computers and Intractability; A Guide to the Theory of
  NP-Completeness}.
\newblock W. H. Freeman \& Co., New York, NY, USA, 1990.

\bibitem{glebov}
Roman Glebov and Michael Krivelevich.
\newblock On the number of hamilton cycles in sparse random graphs.
\newblock {\em {SIAM} J. Discrete Math.}, 27(1):27--42, 2013.

\bibitem{soda10}
Howard Karloff, Siddharth Suri, and Sergei Vassilvitskii.
\newblock A model of computation for mapreduce.
\newblock In {\em Proceedings of the Twenty-first Annual ACM-SIAM Symposium on
  Discrete Algorithms}, SODA '10, pages 938--948, 2010.

\bibitem{soda15}
Hartmut Klauck, Danupon Nanongkai, Gopal Pandurangan, and Peter Robinson.
\newblock Distributed computation of large-scale graph problems.
\newblock In {\em Proceedings of the Twenty-Sixth Annual {ACM-SIAM} Symposium
  on Discrete Algorithms, {SODA} 2015, San Diego, CA, USA, January 4-6, 2015},
  pages 391--410, 2015.

\bibitem{Klee_1981}
Victor Klee and David Larman.
\newblock Diameters of random graphs.
\newblock {\em Canadian Journal of Mathematics}, 33(3):618--640, 1981.

\bibitem{levy2004distributed}
Eythan Levy, Guy Louchard, and Jordi Petit.
\newblock A distributed algorithm to find hamiltonian cycles in {$G(n, p)$}
  random graphs.
\newblock In {\em Workshop on Combinatorial and Algorithmic aspects of
  networking}, pages 63--74. Springer, 2004.

\bibitem{mackenzie1993optimal}
Philip~D MacKenzie and Quentin~F Stout.
\newblock Optimal parallel construction of hamiltonian cycles and spanning
  trees in random graphs.
\newblock In {\em Proceedings of the fifth annual ACM symposium on Parallel
  algorithms and architectures}, pages 224--229. ACM, 1993.

\bibitem{mitzenmacher2017probability}
Michael Mitzenmacher and Eli Upfal.
\newblock {\em Probability and Computing: Randomization and Probabilistic
  Techniques in Algorithms and Data Analysis}.
\newblock Cambridge university press, 2017.

\bibitem{palmer}
Edgar~M. Palmer.
\newblock {\em Graphical Evolution: An Introduction to the Theory of Random
  Graphs}.
\newblock John Wiley \& Sons, Inc., New York, NY, USA, 1985.

\bibitem{PK09}
Gopal Pandurangan and Maleq Khan.
\newblock Algorithms and theory of computation handbook.
\newblock chapter Theory of Communication Networks, pages 27--27. Chapman \&
  Hall/CRC, 2010.

\bibitem{peleg-book}
David Peleg.
\newblock {\em Distributed Computing: A Locality-sensitive Approach}.
\newblock Society for Industrial and Applied Mathematics, Philadelphia, PA,
  USA, 2000.

\bibitem{stoc11}
Atish~Das Sarma, Stephan Holzer, Liah Kor, Amos Korman, Danupon Nanongkai,
  Gopal Pandurangan, David Peleg, and Roger Wattenhofer.
\newblock Distributed verification and hardness of distributed approximation.
\newblock {\em {SIAM} J. Comput.}, 41(5):1235--1265, 2012.

\end{thebibliography}

\end{document}